\documentclass[12pt,a4paper]{article}
\usepackage[utf8]{inputenc}

\usepackage{amsmath,amssymb}
\usepackage{amsthm}
\usepackage{enumitem}
\usepackage{appendix}
\usepackage{theoremref}

\usepackage{graphicx}

\usepackage[usenames]{color}

\usepackage{txfonts}

\usepackage[margin=1in]{geometry}
\usepackage{amsmath}
\usepackage{mathrsfs}
\usepackage{amsfonts}
\usepackage{cite}
\usepackage[mathscr]{euscript}
\usepackage{amssymb}
\usepackage{appendix}
\usepackage{tikz-cd}
\usepackage{empheq}
\usepackage{mathtools}
\usepackage{theoremref}
\usepackage{hyperref}
\usepackage{comment}
\usepackage{mathabx}
\usepackage{dsfont}
\usepackage{multicol}
\usepackage{graphicx}
\usepackage{float}
\usepackage{enumitem}
\usepackage{subcaption}
\usepackage{array}
\usepackage{mhsetup}
\usepackage[misc]{ifsym}
\usepackage{datetime}
\usepackage[in]{fullpage}
\usepackage{algorithm}
\usepackage{algorithmic}
\usepackage{authblk}

%-------------------------------------------------------------------------------------------------------------------------------

%-------------------------------------------------------------------------------------------------------------------------------

%-------------------------------------------------------------------------------------------------------------------------------
%BP macros-------
\newcommand{\ben}{\begin{eqnarray*}}
\newcommand{\enn}{\end{eqnarray*}}

%--------------------

%-------------------------------------------------------------------------------------------------------------------------------
\newtheorem{thm}{Theorem}[section]

\newtheorem{defn}[thm]{Definition}

\newtheorem{prop}[thm]{Proposition}
\newtheorem{cor}[thm]{Corollary}
\newtheorem{rem}[thm]{Remark}

%-------------------------------------------------------------------------------------------------------------------------------
\definecolor{dhcol}{rgb}{0,0.5,0}

\definecolor{cecol}{rgb}{0,0,0.5}

%-------------------------------------------------------------------------------------------------------------------------------

\newcommand{\mres}{%
	\,\raisebox{-.127ex}{\reflectbox{\rotatebox[origin=br]{-90}{$\lnot$}}}\,%
}

\title{Weyl Group Representation of Billiard Trajectories for One-dimensional Hard Sphere Dynamics}
\author{Mark Wilkinson\footnote{Department of Physics and Mathematics, New Hall Block, Nottingham Trent University, Nottingham, United Kingdom (\href{mark.wilkinson@ntu.ac.uk}{mark.wilkinson@ntu.ac.uk}).}}

\date{}

\begin{document}
\maketitle
\begin{abstract}
\noindent We present an exact formula for the dynamics of $N$ hard spheres of radius $r>0$ on an infinite line which evolve under the assumption that total linear momentum and kinetic energy of the system is conserved for all times. This model is commonly known as the one-dimensional \emph{Tonks gas} or the \emph{hard rod gas} model. Our exact formula is expressed as a sum over the Weyl group associated to the root system $A_{N-1}$ and is valid for all initial data in a full-measure subset of the tangent bundle of the hard sphere table. As an application of our explicit formula, we produce a simple proof that the associated billiard flow admits the Liouville measure on the tangent bundle of the hard sphere table as an invariant measure.
\end{abstract}
\section{Introduction}
The study of models of one-dimensional particle systems in both classical and quantum mechanics continues to be an area of active enquiry (cf. \cite{granet2023wavelet, schiltz2023kinetic, gurin2022anomalous, peacock2022quantum, gietka2023squeezing, zhang2022free, saxberg2022disorder}), in part due to the relative simplicity of one-dimensional models compared with their higher-dimensional analogues, but also due to their richness. It is common that one-dimensional models of particle systems are exactly solvable in some sense. We invite the reader to consult the monographs of Baxter \cite{baxter2016exactly} or Franchini \cite{franchini2017introduction} for examples. As regards the study of \emph{classical hard sphere systems} with which the present work is concerned, the well-known work of Tonks \cite{tonks1936complete} established that certain aspects of these models are exactly solvable. Indeed, Tonks was able to derive the exact equation of state for a gas of $N$ spheres whose centres of mass are constrained to lie on a finite line segment. Since the work of Tonks in the 1930s, various other aspects of the one-dimensional hard rod gas model have been shown to be exactly solvable. These results pertain to the behaviour of the hard rod gas over kinetic or hydrodynamic space- and time-scales and for which total momentum and energy of the system is conserved for all time. For instance, it has been shown in the work of Lebowitz, Percus, and Skyes \cite{lebowitz1967kinetic, lebowitz1968time, percus1969exact}, as well as Jepsen \cite{jepsen1965dynamics}, that kinetic equations derived from the underlying hard rod dynamics admit exact solutions. It has also been shown in the work of Boldrighini, Dobrushin, and Sukhov \cite{boldrighini1997one, boldrighini1983one, boldrighini1984hydrodynamics} that it is possible to derive closed form Euler- and Navier-Stokes-like partial differential equations in so-called hydrodynamic limits of the hard rod gas. The solutions of the initial-value problems associated to these PDE are, however, not known to be exact in general.

None of the aforementioned works considers the explicit construction of the underlying momentum- and energy-conserving billiard dynamics governing $N$ hard rods on a line. Owing to the fact that the simultaneous collision of $M\geq 3$ hard rods may be resolved in more than one way whilst conserving total momentum and energy of the system (see Wilkinson \cite{wilkinson2023maximal}), unlike the case of binary collisions, these dynamics are typically only constructed for initial centres of mass and initial velocities outside a non-empty subset of phase space of null measure. This construction has been considered by Sinai \cite{sinai1972ergodic} in the case of infinitely-many hard rods on a line, as well as implicitly by Alexander \cite{alexander1975infinite} in his 1975 Berkeley thesis in both the case of finitely-many and the case of infinitely-many rods. Murphy \cite{murphy1994dynamics} has also considered the case of finitely-many hard rods on a line whose masses and lengths are arbitrary. The construction of dynamics in these works is typically achieved by event-driven algorithms, as opposed to the identification of explicit analytical formulae for the dynamics globally in time. Qualitative properties of the dynamics, in particular ergodicity, have also been studied by Sinai \cite{sinai1972ergodic} as well as by Aizenman, Goldstein and Lebowitz \cite{aizenman1975ergodic}.

In the present article, we show that for any $N\geq 2$, the momentum- and energy-conserving dynamics of $N$ hard rods on an infinite line -- the model on which the statistical mechanics, kinetic theory and hydrodynamics of the hard rod gas are all ultimately based -- is itself exactly solvable in a certain sense. We model the set of admissible centres of the $N$ hard spheres of radius $r>0$ in the system by the so-called \emph{hard sphere table} $\mathcal{P}_{r}\subset\mathbb{R}^{N}$ defined by
\begin{equation}
\mathcal{P}_{r}:=\left\{
X=(x_{1}, ..., x_{N})\in\mathbb{R}^{N}\,:\,|x_{i}-x_{j}|\geq 2r\hspace{2mm}\text{if}\hspace{2mm}i\neq j
\right\},
\end{equation}
and in particular denoting the connected component of $\mathcal{P}_{r}$ consisting of those $X$ whose components are in increasing order by $\mathcal{Q}_{r}$, namely
\begin{equation}
\mathcal{Q}_{r}:=\left\{
X=(x_{1}, ..., x_{N})\in\mathbb{R}^{N}\,:\,x_{1}-2r\leq x_{2}-4r\leq ...\leq x_{N-1}-2(N-1)r\leq x_{N}-2Nr
\right\}.
\end{equation}
In what follows, we prove that for any initial datum $Z_{0}=(X_{0}, V_{0})$ leading to only isolated binary collisions (the set of which is of full measure in the tangent bundle of the hard sphere table), the corresponding piecewise linear trajectory $X:\mathbb{R}\rightarrow\mathbb{R}^{N}$ governing the centres of mass of all $N$ hard rods is given explicitly, up to conjugation by a shift operator and a finite set of times, by
\begin{equation}\label{makeref}
X(t)=\sum_{g\in W}\mathds{1}_{g^{-1}c}(X_{0}+tV_{0})g(X_{0}+tV_{0}),
\end{equation}
where $W$ is the Weyl group of the root system $A_{N-1}$ (cf. Hall \cite{hall2013lie}, Chapter 8), $c$ is a fundamental Weyl chamber associated to $A_{N-1}$, and $\mathds{1}_{g^{-1}c}:\mathbb{R}^{N}\rightarrow\{0, 1\}$ denotes the indicator function of the set $g^{-1}c$. We also prove in turn that the so-called \emph{Liouville measure}, namely the restriction of the $2N$-dimensional Lebesgue measure to phase space, is an invariant measure of the billiard flow $\{T^{t}\}_{t\in\mathbb{R}}$ associated to the billiard trajectory \eqref{makeref} above: see Chernov and Markarian \cite{chernov2006chaotic}. This allows one also to prove that the canonical ensemble is an invariant measure of the billiard flow $\{T^{t}\}_{t\in\mathbb{R}}$, as one would expect. 
%Redo $\mathds{1}$, make sure scalars before vectors in formulae. Tonks \cite{tonks1936complete}, Boldraghini et al. Exactly-solved dynamics elsewhere. Aside from being of intrinsic interest in its own right, the exact formula for the dynamics is useful in a number of ways. The good initial data can be made more simple, and then fleshed out using the permutation group $\mathfrak{s}(N)$. You need to remove $\overline{c}$ throughout, as some points are counted multiple times if you use this...
\subsection{Statements of Main Results}
Our first main result is a novel representation formula for the dynamics of $N$ hard spheres on an infinite line which conserves the total linear momentum and energy of the system. More precisely, we show that for any initial datum in a certain full-measure subset of phase space the associated billiard trajectory on $T\mathcal{P}_{r}$ can be written as a sum over the Weyl group of the root system $A_{N-1}$. We refer the reader to Section \ref{defsbas} below for definitions and notation. In what follows, we state our results for initial data $X_{0}=(x_{0, 1}, ..., x_{0, N})$ that admit the order $x_{0, 1}-2r\leq x_{0, 2}-4r\leq ...\leq x_{0, N-1}-2(N-1)r\leq x_{0, N}-2Nr$. However, it is possible to modify our analytical formula so that it holds for \emph{any} $X_{0}\in\mathcal{P}_{r}$ through the action of the symmetric group on $N$ letters $\mathfrak{s}(N)$, as we discuss below.
\begin{thm}[Weyl Group Representation Formula for Billiard Dynamics on the Hard Sphere Table]\label{mainres}
Let $N\geq 2$ and $r>0$. There exists a Lebesgue full-measure subset $\mathcal{G}_{r}\subset T\mathcal{Q}_{r}\subset\mathbb{R}^{2N}$ of the tangent bundle of the component of the  hard sphere table $\mathcal{Q}_{r}\subset\mathbb{R}^{N}$ such that for any $Z_{0}=(X_{0}, V_{0})\in\mathcal{G}_{r}$ the unique momentum- and energy-conserving billiard trajectory $X:\mathbb{R}\rightarrow\mathcal{Q}_{r}$ with the properties $X(0)=X_{0}$ and $\frac{d}{dt_{-}}|_{t=0}X=V_{0}$ is given by the formula
\begin{equation}\label{repontableP}
X(t)=\sum_{g\in W}\mathds{1}_{g^{-1}c}(S_{-r}(X_{0})+tV_{0})S_{r}(g(S_{-r}(X_{0})+tV_{0}))    
\end{equation}
for all $t\in\mathbb{R}\setminus\mathcal{T}(Z_{0})$, where $W\subset\mathrm{O}(N)$ is the Weyl group associated to the root system $A_{N-1}$, $c\subset\mathbb{R}^{N}$ is the fundamental chamber associated to the subset of simple roots 
\begin{equation}
\Delta:=\{-e_{i}^{N}+e_{j}^{N}\,:\,1\leq i<j\leq N\}\subseteq A_{N-1},
\end{equation}
$S_{r}:\mathbb{R}^{N}\rightarrow\mathbb{R}^{N}$ is the shift operator defined pointwise by $S_{r}(Y):=Y+r\sum_{i=1}^{N}ie_{i}^{N}$ for all $Y\in\mathbb{R}^{N}$ with $e^{N}_{i}$ denoting the $i$\textsuperscript{th} canonical basis vector in $\mathbb{R}^{N}$, and $\mathcal{T}(Z_{0})\subset\mathbb{R}$ is a finite set of collision times. Moreover, $\mathcal{T}(Z_{0})$ is a set of removable discontinuities of the map \eqref{repontableP}.
\end{thm}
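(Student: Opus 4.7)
The plan is to reduce the $N$-rod billiard to the classical elastic dynamics of $N$ equal-mass hard point particles on the line, which in turn equals the sort of $N$ non-interacting trajectories; the latter is precisely what the Weyl-group sum computes. The shift operator $S_r$ effects the geometric reduction, while the Weyl sum implements the sort.

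First, I would verify that for $X_0$ in the interior of $\mathcal{Q}_r$ the point $Y_0:=S_{-r}(X_0)$ lies in the open fundamental chamber $c$, and that the condition defining a binary rod collision, $x_{i+1}-x_i=2r$, is transported under $S_{-r}$ to the condition defining the corresponding wall of $c$. Under this conjugacy, the rod billiard on $\mathcal{Q}_r$ becomes the billiard on $\overline{c}$ in which adjacent coordinates collide by velocity exchange. Next, I would invoke the classical pass-through identity for equal-mass elastic point collisions: for generic initial data, the ordered hard-point trajectory $Y(t)$ coincides with the sort of the free trajectory $P(t):=Y_0+tV_0$. I would prove this by induction on the number of coincidence-crossings of $P$ in $[0,t]$. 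Between crossings, $Y$ and $P$ evolve by the same linear flow; at each crossing, the swap of the two relevant ordered velocities in $Y$ reproduces the passing-through of the two free particles in $P$.

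With the identity $Y(t)=\mathrm{sort}(P(t))$ established, the sort is identified with the Weyl sum as follows. For $P(t)$ in the interior of a chamber $g^{-1}c$, the unique element of the $W$-orbit of $P(t)$ that lies in $c$ is $g(P(t))$, so exactly one indicator $\mathds{1}_{g^{-1}c}(P(t))$ is nonzero and
\[
\sum_{g\in W}\mathds{1}_{g^{-1}c}(P(t))\,g(P(t))=\mathrm{sort}(P(t))=Y(t).
\]
Applying $S_r$ recovers $X(t)$, yielding formula \eqref{repontableP}. I would then define $\mathcal{G}_r$ to be the set of $(X_0,V_0)\in T\mathcal{Q}_r$ such that $X_0\in\mathrm{int}\,\mathcal{Q}_r$, no two components of $V_0$ coincide, and $P(\cdot)$ never passes through a triple coincidence $P_i=P_j=P_k$ nor through two distinct transpositional walls simultaneously. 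Each excluded event is an algebraic subset of $T\mathcal{Q}_r$ of strictly lower dimension, so $\mathcal{G}_r$ has full Lebesgue measure. The finite set $\mathcal{T}(Z_0)$ consists of the finitely many binary crossing times of $P$; at each such $t$ every indicator vanishes and the right-hand side of \eqref{repontableP} spuriously returns $0$, but both one-sided limits exist and coincide with $S_r(\mathrm{sort}(P(t)))$, producing removable discontinuities.

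The main obstacle I foresee is the pass-through identity when $N\geq 3$: one must verify that the relabeling of physical rods induced by successive chamber-wall crossings correctly tracks the velocity exchanges between neighboring rods in $X(t)$. Propagating this bookkeeping through chains of binary collisions is precisely what the genericity hypotheses defining $\mathcal{G}_r$ enable; the exclusion of triple and simultaneous double coincidences is exactly what rules out the non-uniqueness ambiguities noted in the introduction via \cite{wilkinson2023maximal}, so that every chamber crossing of $P$ amounts to a single adjacent transposition in the sorting permutation, matching exactly one velocity swap in the hard-rod evolution.
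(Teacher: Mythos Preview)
Your approach is correct and closely parallel to the paper's, but organised differently. The paper works directly with the Weyl-sum formula on the fundamental table $\mathcal{Q}$ (Theorem~\ref{repo}): it verifies that the sum takes values in $\mathcal{Q}$, shows the discontinuities at hyperplane crossings are removable by observing that adjacent chambers differ by a single reflection $\sigma_{i,j}$ (so $g_{+}=g_{-}\sigma_{i,j}$ and the one-sided limits agree because $\sigma_{i,j}$ fixes $\Pi_{i,j}$ pointwise), and then checks momentum and energy conservation algebraically from $W\subset\mathrm{O}(N)$ and $g\mathbf{1}=\mathbf{1}$ for all $g\in W$. Uniqueness of the billiard trajectory for good data (Proposition~\ref{uniqueness}, attributed to Alexander) then forces the Weyl sum to coincide with the billiard map, and conjugation by $S_{r}$ yields the corollary on $\mathcal{Q}_{r}$. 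You instead invoke the classical pass-through identity---that the equal-mass hard-point billiard is the sort of the free flow---and then recognise the sort as the Weyl sum. This is more physically transparent and makes the conservation laws automatic (any permutation preserves both $\sum v_{i}$ and $\sum v_{i}^{2}$), at the cost of the inductive crossing-by-crossing bookkeeping you flag as the main obstacle. The paper's route sidesteps that induction entirely: it never names the sorting map, and the adjacent-chamber computation replaces your induction with a single local argument at each wall. Both proofs ultimately rest on the same uniqueness statement for data in $\mathcal{G}$ and the same simply-transitive action of $W$ on chambers; they differ only in whether one first identifies the billiard with the sort and then rewrites the sort as a Weyl sum (your route), or first verifies that the Weyl sum defines \emph{some} momentum- and energy-conserving billiard trajectory and then appeals to uniqueness (the paper's route).
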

We use the formula \eqref{repontableP} to plot centre-of-mass trajectories in figure \ref{figgy} below. We prove this theorem by working equivalently on what we term the \emph{fundamental table} $\mathcal{Q}$ (whose name derives from the analogous notion of \emph{fundamental chamber} from root system theory). Without the role of the radius $r>0$ in the dynamics of the system, momentum- and energy-conserving billiard trajectories $X:\mathbb{R}\rightarrow\mathcal{Q}$ may be written more simply as    
\begin{equation}\label{simple}
X(t)=\sum_{g\in W}\mathds{1}_{g^{-1}c}(L(t))gL(t),     
\end{equation}
where $L(t):=X_{0}+tV_{0}$ for $t\in\mathbb{R}$. For general data  $Z_{0}=(X_{0}, V_{0})$ for which $X_{0}$ lies in a component of the hard sphere table $\mathcal{P}_{r}$ other than $\mathcal{Q}_{r}$, it can also be shown readily as a consequence of the above result that the corresponding billiard trajectory $X:\mathbb{R}\rightarrow\mathcal{P}_{r}$ is given explicitly by
\begin{equation}
X(t)=\sum_{\pi\in\mathfrak{s}(N)}\sum_{g\in W}\mathds{1}_{T(\pi\mathcal{Q})}(Z_{0})\mathds{1}_{g^{-1}c}(S_{-r}(\pi^{-1}X_{0})+t\pi^{-1}V_{0})S_{r}(\pi g(S_{-r}(\pi^{-1}X_{0})+t\pi^{-1}V_{0})),      
\end{equation}
where $\mathfrak{s}(N)$ denotes the symmetric group on $N$ letters.

Our second main result is a short proof of the fact that the so-called \emph{Liouville measure} $\Lambda_{r}:=\mathscr{L}\mres T\mathcal{P}_{r}$ (where $\mathscr{L}$ denotes the Lebesgue measure on $\mathbb{R}^{2N}$ and $\mres$ the restriction operator) is an invariant measure of any momentum- and energy-conserving billiard flow on $T\mathcal{P}_{r}$.
\begin{thm}[The Liouville Measure is an Invariant Measure of any Momentum- and Energy-Conserving Billiard Flow on $T\mathcal{P}_{r}$]\label{liouv}
Let $N\geq 2$ and $r>0$. Any momentum- and energy-conserving billiard flow $\{T^{t}\}_{t\in\mathbb{R}}$ on $T\mathcal{P}_{r}$ admits the property
\begin{equation}
T^{t}\#\Lambda_{r}=\Lambda_{r}    
\end{equation}
for all $t\in\mathbb{R}$, where $T^{t}\#\Lambda_{r}(E):=\Lambda_{r}(T^{-t}(E))$ for measurable subsets $E\subseteq T\mathcal{P}_{r}$.
\end{thm}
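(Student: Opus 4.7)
The plan is to use the explicit Weyl group representation formula of Theorem \ref{mainres} to show that, on each piece of a measurable partition of phase space, the flow map $T^t$ is an affine transformation whose linear part has determinant $\pm 1$. Since Lebesgue measure is preserved by any such affine map, these local statements aggregate to the desired global invariance.

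First I would reduce to the case of the distinguished component. Because $T\mathcal{P}_r$ decomposes modulo a Lebesgue-null set as the disjoint union of $T(\pi\mathcal{Q}_r)$ for $\pi\in\mathfrak{s}(N)$, and each $\pi$ acts on $\mathbb{R}^N$ as an orthogonal permutation, it suffices to prove $T^t\#\Lambda_r=\Lambda_r$ on $T\mathcal{Q}_r$. Moreover, in the binary-collision regime hard rods cannot overtake one another, so $T^t$ maps $T\mathcal{Q}_r$ bijectively to itself modulo a null set.

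Now fix $t\in\mathbb{R}$ and take the full-measure subset $\mathcal{G}_r\subset T\mathcal{Q}_r$ from Theorem \ref{mainres}. Partition $\mathcal{G}_r$, up to a null set, into measurable pieces
$$\mathcal{G}_r^{g,t}:=\{(X_0,V_0)\in\mathcal{G}_r\,:\,S_{-r}(X_0)+tV_0\in g^{-1}c\},\qquad g\in W,$$
where the Weyl chambers $\{g^{-1}c\}_{g\in W}$ tile $\mathbb{R}^N$ up to a codimension-one set. On each such piece the representation formula \eqref{repontableP} collapses to a single summand, giving
$$X(t)=S_r\circ g\circ S_{-r}(X_0)+t\,gV_0,\qquad V(t)=gV_0.$$
The flow $T^t$ restricted to $\mathcal{G}_r^{g,t}$ is therefore affine with linear part
$$M_g:=\begin{pmatrix} g & tg \\ 0 & g \end{pmatrix},$$
and since $g\in W\subset\mathrm{O}(N)$, we have $|\det M_g|=|\det g|^2=1$, so $T^t|_{\mathcal{G}_r^{g,t}}$ preserves the $2N$-dimensional Lebesgue measure.

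Summing over $g\in W$ and using that $T^t$ is a bijection of $T\mathcal{Q}_r$ (so $\{T^t(\mathcal{G}_r^{g,t})\}_{g\in W}$ also tiles $T\mathcal{Q}_r$ modulo a null set) yields $T^t\#\Lambda_r=\Lambda_r$ on $T\mathcal{Q}_r$, and the symmetrisation step then extends the conclusion to $T\mathcal{P}_r$. The main technical point is verifying that the images $T^t(\mathcal{G}_r^{g,t})$ tile the target without measure-theoretic overlap, together with measurability of the pieces; both follow from the time-reversibility of the billiard flow, which allows the inverse map $T^{-t}$ to be written via the same Weyl group formula and thereby pairs each piece in the domain with its unique preimage in the target.
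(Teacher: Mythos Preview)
Your proof is correct and follows essentially the same approach as the paper: both partition phase space according to which Weyl chamber $g^{-1}c$ contains the free-flight point $X_0+tV_0$, observe that on each piece the flow is an affine map whose linear part is block upper-triangular with orthogonal diagonal blocks (hence measure-preserving), and then use that the chambers tile $\mathbb{R}^N$ up to a null set. The only difference is packaging: the paper carries this out via duality with test functions $\Phi\in C^0\cap L^1$ and explicit changes of variables in the integral $\langle T^t\#\Lambda,\Phi\rangle$, whereas you phrase it directly as a Jacobian computation on the pieces $\mathcal{G}_r^{g,t}$ together with an appeal to time-reversibility to handle the tiling of the target.
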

Whilst it is typically expected in the theory of billiards that energy-conserving billiard flows on the tangent bundle of the table admit the analogous Liouville measure as an invariant measure, we believe this is the first time this has been proved rigorously in the context of one-dimensional hard sphere dynamics. Theorem \ref{liouv} admits applications to the rigorous derivation of the weak form of the Liouville equation, as well as the associated BBGKY hierarchy, for example, the likes of which we discuss in the final Section of the paper.
\subsection{Structure of Paper}
In Section \ref{defsbas}, we set out the basic objects with which we work throughout this article. In Section \ref{weylrep}, we prove the claimed representation formula for the dynamics of $N$ hard rods on an infinite line. In Section \ref{louy}, we make use of the previous section to demonstrate that the Liouville measure is an invariant measure of any momentum- and energy-conserving billiard flow on the tangent bundle of the hard rod table. Finally, in Section \ref{closrem}, we close with some remarks on the applications of the results obtained in this article. 
\section{Definitions and Basic Results}\label{defsbas}
In this Section, we lay out the basic definitions of those objects we employ in this article. In particular, we introduce some of the basic objects and associated useful results that pertain to root systems and their Weyl algebras of which we make use in the sequel.
\subsection{The Hard Sphere Billiard Table}
Suppose $r>0$. We write $\mathcal{P}_{r}\subset\mathbb{R}^{N}$ to denote the \textbf{hard sphere billiard table} defined by
\begin{equation}
\mathcal{P}_{r}:=\left\{
X=(x_{1}, ..., x_{N})\in\mathbb{R}^{N}\,:\,|x_{i}-x_{j}|\geq 2r\hspace{2mm}\text{for}\hspace{2mm}i\neq j
\right\}.
\end{equation}
We write $\mathcal{Q}_{r}\subset\mathcal{P}_{r}$ to denote the component of the hard sphere table whose elements lie in ascending order, namely
\begin{equation}
\mathcal{Q}_{r}:=\left\{
X=(x_{1}, ..., x_{N})\in\mathbb{R}^{N}\,:\,x_{1}-2r\leq x_{2}-4r\leq ...\leq x_{N-1}-2(N-1)r\leq x_{N}-2Nr
\right\}.
\end{equation}
We also write $\mathcal{Q}\subset\mathbb{R}^{N}$ to denote the \textbf{fundamental table} defined by
\begin{equation}
\mathcal{Q}:=\left\{
X=(x_{1}, ..., x_{N})\in\mathbb{R}^{N}\,:\,x_{i}\leq x_{i+1}\hspace{2mm}\text{for}\hspace{2mm}i\in\{1, ..., N-1\}
\right\}.
\end{equation}
If we define the \emph{shift map} $S_{r}:\mathbb{R}^{N}\rightarrow\mathbb{R}^{N}$ pointwise by
\begin{equation}
S_{r}(X):=X+r\sum_{i=1}^{N}ie_{i}^{N}    
\end{equation}
for $X\in\mathcal{Q}$, it holds that the hard sphere billiard table $\mathcal{P}_{r}$ admits the representation in terms of the fundamental table $\mathcal{Q}$ given by
\begin{equation}\label{qintermsofp}
\mathcal{P}_{r}=\bigcup_{\pi\in\mathfrak{s}(N)}(S_{r}\circ\pi) \mathcal{Q}.   
\end{equation}
We note that for $N\geq 3$ the fundamental table $\mathcal{Q}\subset\mathbb{R}^{N}$ -- and in turn $\mathcal{P}_{r}\subset\mathbb{R}^{N}$ -- admits the structure of a manifold with corners: see Joyce \cite{joyce2009manifolds}. The boundary $\partial\mathcal{Q}\subset\mathcal{Q}$ is defined to be $\partial\mathcal{Q}:=\mathcal{Q}\setminus\mathcal{Q}^{\circ}$, where $\mathcal{Q}^{\circ}$ denotes the interior of $\mathcal{Q}$ with respect to the Euclidean topology. Owing to the observation \eqref{qintermsofp}, we focus our subsequent attention on the fundamental table $\mathcal{Q}$ and thereby only on one connected component of $\mathcal{P}_{r}$, namely $\mathcal{Q}_{r}$.
\subsection{The Tangent Bundle of the Table}\label{tangentbundle}
We shall work on the tangent bundle $T\mathcal{Q}$ of the table $\mathcal{Q}$ as a means of modelling both the centres of mass $X=(x_{1}, ..., x_{N})\in\mathcal{Q}$ of the $N$ hard spheres in the system as well as the velocities $V=(v_{1}, ..., v_{N})\in\mathbb{R}^{N}$ of their centres. Our main result does not hold for all initial data for billiard trajectories taken in $T\mathcal{Q}$, rather for only a full-measure subset thereof which lead to dynamics involving only isolated binary collisions. For this reason, we require some terminology to describe the set of all initial data for which Theorem \ref{mainres} holds. 
\subsubsection{Links and Chains}
To define the tangent bundle of $\mathcal{Q}$, we require some terminology. Moving forward, we term a nonempty subset of integers $\gamma\subseteq\{1, ..., N\}$ a \emph{link} if and only if it is of the form of a sequence of consecutive integers, i.e. $\gamma=\{i, i+1, ..., i+j-1, i+j\}$ for some $j>0$. The cardinality of a link is denoted by $|\gamma|$ and termed its \emph{length}. Moreover, we call a nonempty subset of integers $\gamma\subseteq\{1, ..., N\}$ a \emph{chain} if and only if it is the disjoint union of any number of links. (In particular, all links are chains but not all chains are links.) We shall abuse notation and write a given chain $\gamma$ in the form $(\gamma_{1}, ..., \gamma_{M})$, where each $\gamma_{i}\subseteq\{1, ..., N\}$ denotes a distinct link comprising the chain. Whenever $N\geq 2$ is understood, we denote the collection of all chains by $\Gamma$. Furthermore, we denote the subset of $\Gamma$ consisting of all chains of length 2 by $\Gamma_{2}$.
\subsubsection{The Tangent Bundle of the Fundamental Table}
With this in place, we define the tangent bundle $T\mathcal{Q}$ of the fundamental table $\mathcal{Q}$ by
\begin{equation}
T\mathcal{Q}:=\bigsqcup_{X\in\mathcal{Q}}T_{X}\mathcal{Q},    
\end{equation}
where the fibres $T_{X}\mathcal{Q}\subset\mathbb{R}^{N}$ are defined for each base point $X\in\mathcal{Q}$ by
\begin{equation}\label{fibres}
T_{X}\mathcal{Q}:=\left\{
\begin{array}{ll}
\mathbb{R}^{N} & \quad \text{if}\hspace{2mm}X\in\mathcal{Q}^{\circ},\vspace{2mm}\\
\mathcal{V}_{\gamma} & \quad \text{if}\hspace{2mm}X\in\partial_{\gamma}\mathcal{Q}\hspace{2mm}\text{for some}\hspace{1mm}\gamma\in\Gamma,
\end{array}
\right.
\end{equation}
where the boundary component $\partial_{\gamma}\mathcal{Q}\subset\mathcal{Q}$ is defined by
\begin{equation}
\partial_{\gamma}\mathcal{Q}:=\bigcap_{k=1}^{M}\left\{
X=(x_{1}, ..., x_{N})\in\partial\mathcal{Q}\,:\,x_{i}=x_{j}\hspace{2mm}\text{for all}\hspace{2mm}i, j\in\gamma_{k}
\right\}
\end{equation}
if $\gamma=(\gamma_{1}, ..., \gamma_{M})$ for some $M\geq 1$, and $\mathcal{V}_{\gamma}\subset\mathbb{R}^{N}$ denotes the set of velocities given by
\begin{equation}
\mathcal{V}_{\gamma}:=\bigcap_{k=1}^{M}\left\{
V=(v_{1}, ..., v_{N})\in\mathbb{R}^{N}\,:\,V\cdot (e_{i}^{N}-e_{i+1}^{N})\geq 0\hspace{2mm}\text{for}\hspace{2mm}\min\gamma_{k}\leq i\leq \min\gamma_{k}+|\gamma_{k}|-2
\right\}
\end{equation}
if $\gamma=(\gamma_{1}, ..., \gamma_{M})$ for some $M\geq 1$. We denote the component of the boundary $\partial\mathcal{Q}$ characterising isolated binary collisions by $\partial_{2}\mathcal{Q}$, namely
\begin{equation}
\partial_{2}\mathcal{Q}:=\bigcup_{\gamma\in\Gamma_{2}}\partial_{\gamma}\mathcal{Q}.    
\end{equation}
We note that $T\mathcal{Q}$ is a Lebesgue-measurable subset of $\mathbb{R}^{2N}$. In contrast with some authors (cf. Joyce \cite{joyce2009manifolds}, Lee \cite{lee2012smooth}), we do not define the fibres $T_{X}\mathcal{Q}$ to be $\mathbb{R}^{N}$ for \emph{all} $X\in\mathcal{Q}$. The reader may readily check in our case that the fibres $T_{X}\mathcal{Q}$ defined in \eqref{fibres} above for $X\in\partial_{\gamma}\mathcal{Q}$ are precisely the sets of all \emph{pre-collisional} velocities of billiard trajectories, namely
\begin{equation}
\left\{
\frac{d}{ds\_}\xi(s)\bigg|_{s=0}\,:\,\xi\in C^{1}_{\pm}(\mathbb{R}, \mathcal{Q})\hspace{2mm}\text{with}\hspace{2mm}\xi(0)\in\partial_{\gamma}\mathcal{Q}
\right\},
\end{equation}
where $C^{1}_{+}(\mathbb{R}, \mathcal{Q})$ and $C^{1}_{-}(\mathbb{R}, \mathcal{Q})$ denote the sets of right- and left-differentiable $\mathcal{Q}$-valued maps on $\mathbb{R}$, respectively. This definition is consistent with the `usual' definition of tangent space at an interior point of a manifold of class $C^{1}$, namely
\begin{equation}
\mathbb{R}^{N}=\left\{
\frac{d}{ds}\xi(s)\bigg|_{s=0}\,:\,\xi\in C^{1}_{\pm}(\mathbb{R}, \mathcal{Q})\hspace{2mm}\text{with}\hspace{2mm}\xi(0)\in\mathcal{Q}^{\circ}
\right\}.   
\end{equation}
We adopt this definition of tangent space as it is necessary if one wishes to define billiard flow operators $T^{t}$ on $T\mathcal{Q}$ whose values $T^{t}(Z_{0})$ are unambiguous if $Z_{0}\in T\mathcal{Q}$ is such that $X_{0}\in\partial\mathcal{Q}$.

In the sequel, we write $\mathscr{L}$ to denote the Lebesgue measure on $\mathbb{R}^{2N}$ and we write $\Lambda$ to denote the restriction of $\mathscr{L}$ to $T\mathcal{Q}$. The measure $\Lambda$ shall be termed subsequently as the \textbf{Liouville measure} on $T\mathcal{Q}$.
\subsection{Good Initial Data in $T\mathcal{Q}$}
The explicit formula for the dynamics of $N$ hard spheres on an infinite line we study herein only holds true for those initial data which give rise to isolated binary collisions, i.e. it does not hold for those data leading to more than 2 spheres simultaneously in contact for some instant of time. In the language of Section \ref{tangentbundle} above, it holds only for those $Z_{0}=(X_{0}, V_{0})\in T\mathcal{Q}$ for which the associated billiard trajectory $X:\mathbb{R}\rightarrow\mathcal{Q}$ has the property 
\begin{equation}
X(t)\in\mathcal{Q}\setminus\left(\bigcup_{\gamma\in\Gamma\setminus\Gamma_{2}}\partial_{\gamma}\mathcal{Q}\right)    
\end{equation}
for all times $t\in\mathbb{R}$. For this reason, we define what we mean by a set of \emph{good} initial data $\mathcal{G}\subseteq T\mathcal{Q}$ leading to (at most) isolated binary collisions, as well as a set of \emph{bad} initial data $\mathcal{B}$ which is the complement of $\mathcal{G}$ in $T\mathcal{Q}$.
\begin{defn}[Good Initial Data in $T\mathcal{Q}$]
Let $N\geq 2$. We write $\mathcal{G}\subset T\mathcal{Q}$ to denote the set of all \textbf{good} initial data in the bundle $T\mathcal{Q}$ given by
\begin{equation}
\mathcal{G}:=\bigsqcup_{X\in\mathcal{Q}}\mathcal{G}_{X},    
\end{equation}
where the fibres $\mathcal{G}_{X}\subset\mathbb{R}^{N}$ are defined for each base point $X\in\mathcal{Q}$ by
\begin{equation}
\mathcal{G}_{X}:=\left\{
V\in\mathbb{R}^{N}\,:\,\ell(X, V)\cap\left(\bigcup_{\gamma\in\Gamma\setminus\Gamma_{2}}\partial_{\gamma}\mathcal{Q}\right)=\varnothing
\right\},
\end{equation}
where $\ell(X, V)\subset\mathbb{R}^{N}$ is the line $\ell(X, V):=\{X+sV\,:\,s\in\mathbb{R}\}$.
\end{defn}
It may be checked that $\mathcal{B}:=T\mathcal{Q}\setminus\mathcal{G}$ is a $\mathscr{L}$-null subset of $T\mathcal{Q}$: we refer the reader to the work of Alexander \cite{alexander1975infinite} for details.
\begin{rem}
For details on how to define billiard dynamics on $T\mathcal{Q}$ for those initial data in $\mathcal{B}$ leading to the simultaneous collision of $M$ spheres on a line with $3\leq M\leq N$, we refer the reader to Wilkinson (see \cite{wilkinson2023maximal}, Section 5).
\end{rem}
\subsection{Billiard Trajectories and Billiard Flows on $T\mathcal{Q}$}
Our two main results pertain to both billiard trajectories on $\mathcal{Q}$, as well as to the billiard flows on $T\mathcal{Q}$ defined in terms of these trajectories and their left derivatives on $\mathbb{R}$. Let us now define these basic objects.
\subsubsection{Billiard Trajectories on $\mathcal{Q}$}
We shall say that a map $X:\mathbb{R}\rightarrow\mathcal{Q}$ is a \textbf{billiard trajectory} on $\mathcal{Q}$ if and only if $X$ is a piecewise linear function whose left- and right-derivatives exist everywhere on $\mathbb{R}$, and whose only points of non-differentiability $\tau\in\mathbb{R}$ are those for which $X(\tau)\in\partial\mathcal{Q}$. The \textbf{velocity map} $V:\mathbb{R}\rightarrow\mathbb{R}^{N}$ associated to a billiard trajectory $X$ is defined pointwise by
\begin{equation}
V(t):=\frac{d}{ds_{-}}\bigg|_{s=t}X(s)    
\end{equation}
for all $t\in\mathbb{R}$. We say that a billiard trajectory \emph{conserves momentum and energy} if and only if its velocity map $V$ admits the properties
\begin{equation}\label{linmom}
V(t)\cdot\mathbf{1}=V(0)\cdot\mathbf{1}    
\end{equation}
and
\begin{equation}\label{energy}
|V(t)|^{2}=|V(0)|^{2}    
\end{equation}
for all $t\in\mathbb{R}$, where $\mathbf{1}\in\mathbb{R}^{N}$ denotes the vector of ones given by
\begin{equation}
\mathbf{1}:=\sum_{i=1}^{N}e_{i}^{N}.    
\end{equation}
We apply the above terminology mutatis mutandis for the case of billiard trajectories $X:\mathbb{R}\rightarrow\mathcal{P}_{r}$ on $\mathcal{P}_{r}$. We state without proof the following well-known result which follows from the work of Alexander \cite{alexander1975infinite, alexander1976time}.
\begin{prop}\label{uniqueness}
Let $N\geq 2$ and suppose $Z_{0}=(X_{0}, V_{0})\in\mathcal{G}$. There exists a unique billiard trajectory $X:\mathbb{R}\rightarrow\mathcal{Q}$ that conserves momentum and energy with the property that $X(0)=X_{0}$ and $V(0)=V_{0}$.
\end{prop}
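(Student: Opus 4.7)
The plan is to construct $X$ by an event-driven induction on collision times, to extend the construction to all of $\mathbb{R}$ via a finiteness argument, and then to deduce uniqueness by a local analysis at each binary collision. The hypothesis $Z_{0}\in\mathcal{G}$ enters essentially in that every collision encountered is at worst a disjoint union of adjacent binary collisions, which restricts the local problem to the one-dimensional two-body case.

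First I would build a strictly increasing sequence of collision times $0\leq\tau_{1}<\tau_{2}<\ldots$ together with post-collisional velocities $W_{0}, W_{1}, \ldots$ valid on successive subintervals. If $X_{0}\in\mathcal{Q}^{\circ}$, set $W_{0}:=V_{0}$; if instead $X_{0}\in\partial_{\gamma}\mathcal{Q}$ for some $\gamma\in\Gamma_{2}$, treat $t=0$ as a collision with pre-collisional velocity $V_{0}\in\mathcal{V}_{\gamma}$ and set $W_{0}$ to be its image under the collision rule below. Then define $X(t):=X_{0}+tW_{0}$ until the smallest $\tau_{1}>0$ for which $X_{0}+\tau_{1}W_{0}\in\partial\mathcal{Q}$, update the velocity to $W_{1}$, and iterate. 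By $Z_{0}\in\mathcal{G}$, the point $X(\tau_{k})$ lies in $\partial_{\gamma}\mathcal{Q}$ for some $\gamma\in\Gamma_{2}$ at every step. The collision rule for $\gamma=(\gamma_{1}, \ldots, \gamma_{M})\in\Gamma_{2}$ consists in swapping, for each link $\gamma_{k}=\{i_{k}, i_{k}+1\}$, the components $v_{i_{k}}$ and $v_{i_{k}+1}$ while leaving all remaining components fixed; since the particles have equal mass, this manifestly preserves both $V\cdot\mathbf{1}$ and $|V|^{2}$ and reverses the sign of each normal component $V\cdot(e_{i_{k}}^{N}-e_{i_{k}+1}^{N})$, so the trajectory immediately re-enters $\mathcal{Q}^{\circ}$. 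Extension to $t<0$ follows by applying the same construction to the time-reversed data.

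To conclude existence I would rule out accumulation of the collision times. The cleanest route is to observe that on $\mathcal{Q}$ the swap rule is conjugate, via a permutation of particle labels determined by the collision history, to free transport of $N$ non-interacting particles on the line: reading out the coordinates in the order dictated by that permutation produces $N$ trajectories each of constant velocity. Consequently the number of collisions in any bounded time interval is bounded above by the number of pairwise crossings of $N$ affine trajectories in the $(t, x)$-plane, hence by $\binom{N}{2}$. This finiteness is the principal technical obstacle, and is exactly the substantive content of Alexander's result; everything else in the existence argument is essentially local bookkeeping.

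For uniqueness, any alternative trajectory $\widetilde{X}$ is linear between consecutive collisions and so must share its first collision time with $X$. At that time, the post-collisional velocity of $\widetilde{X}$ is constrained by (i) conservation of $V\cdot\mathbf{1}$ and $|V|^{2}$; (ii) invariance of those components $v_{j}$ whose indices $j$ do not participate in the collision, since $\widetilde{X}_{j}$ is differentiable at $\tau_{1}$ for such $j$; and (iii) the requirement that $\widetilde{X}$ immediately re-enter $\mathcal{Q}$ rather than exit it. On each colliding pair these constraints leave only the identity and the transposition as candidates, and only the transposition satisfies (iii), so $\widetilde{X}$ agrees with $X$ on $[0, \tau_{2})$. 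Iterating gives $\widetilde{X}=X$ on $[0, \infty)$, and the time-reversed argument yields equality on $(-\infty, 0]$.
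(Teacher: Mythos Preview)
The paper does not actually prove this proposition: it states the result without proof, refers to Alexander's thesis, and records only the algebraic observation that the two-body system $\xi_{1}+\xi_{2}=w_{1}+w_{2}$, $\xi_{1}^{2}+\xi_{2}^{2}=w_{1}^{2}+w_{2}^{2}$, $\xi_{1}-\xi_{2}\geq 0$ has a unique solution when $w_{1}\leq w_{2}$. Your event-driven construction, the $\binom{N}{2}$ collision bound via conjugacy to free motion, and the local two-body uniqueness are precisely the standard ingredients one would supply to fill in Alexander's argument, and your uniqueness step invokes exactly the algebraic fact the paper highlights.

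There is, however, a genuine issue in your step (ii). You assert that for indices $j$ outside the colliding link the component $\widetilde{X}_{j}$ is differentiable at $\tau_{1}$, hence $v_{j}$ is unchanged across the collision. This does not follow from the paper's definition of billiard trajectory: that definition allows the \emph{vector-valued} map $X$ to be non-differentiable at any $\tau$ with $X(\tau)\in\partial\mathcal{Q}$, and imposes no componentwise restriction. With only the two scalar constraints \eqref{linmom}, \eqref{energy} and the single re-entry inequality $v_{+,i}\leq v_{+,i+1}$, the admissible post-collisional velocities at an isolated binary collision form an $(N-2)$-dimensional set for $N\geq 3$; for instance with $N=3$ and pre-collisional velocity $(2,0,0)$, both $(0,2,0)$ and $\bigl(\tfrac{1-\sqrt{5}}{2},\tfrac{1+\sqrt{5}}{2},1\bigr)$ satisfy all three conditions. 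The paper's own remark after the proposition makes the same silent reduction to two bodies, so the intended hypothesis is evidently locality of the interaction (equivalently, specular reflection in the boundary hyperplane $\Pi_{i,i+1}$), as in Alexander's setting. You should state that assumption explicitly rather than claim to derive it; once it is granted, the remainder of your argument goes through.
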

The above result follows essentially from the fact that the semi-algebraic set comprising all solutions $\xi=(\xi_{1}, \xi_{2})\in\mathbb{R}^{2}$ of the system of polynomial (in)equations
\begin{equation}
\left\{
\begin{array}{l}
\xi_{1}+\xi_{2}=w_{1}+w_{2}\vspace{2mm}\\
\xi_{1}^{2}+\xi_{2}^{2}=w_{1}^{2}+w_{2}^{2}\vspace{2mm}\\
\xi_{1}-\xi_{2}\geq 0
\end{array}
\right.
\end{equation}
is a singleton for any given 2-particle velocity vector $(w_{1}, w_{2})\in\mathbb{R}^{2}$ satisfying $w_{1}-w_{2}\leq 0$. Finally, the set of \textbf{collision times} $\mathcal{T}(X)\subset\mathbb{R}$ of a billiard trajectory $X$ is defined by
\begin{equation}
\mathcal{T}(X):=\left\{
s\in\mathbb{R}\,:\,X(s)\in\partial\mathcal{Q}
\right\}.
\end{equation}
If a billiard trajectory $X$ is uniquely determined by the data $X(0)\in\mathcal{Q}$ and $V(0)\in\mathbb{R}^{N}$, we denote $\mathcal{T}(X)$ simply by $\mathcal{T}(Z_{0})$ where $Z_{0}:=(X(0), V(0))\in T\mathcal{Q}$. We note it follows by definition that the velocity map $V$ of a given billiard trajectory $X$ is a lower semi-continuous function on $\mathbb{R}$ whose set of points of discontinuity is exactly $\mathcal{T}(X)$.
\subsubsection{Billiard Flows on $T\mathcal{Q}$}
As we are also interested in transporting measures on $T\mathcal{Q}$ in this article, we also work with \emph{flow maps} associated to billiard trajectories. In what follows, we write $\Pi_{1}:T\mathcal{Q}\rightarrow\mathcal{Q}$ to denote the canonical projection onto the base space $\mathcal{Q}$, and we write $\Pi_{2}:T\mathcal{Q}\rightarrow\mathbb{R}^{N}$ to denote the canonical projection onto $\mathbb{R}^{N}$. We say that a one-parameter family $\{T^{t}\}_{t\in\mathbb{R}}$ of maps $T^{t}:T\mathcal{Q}\rightarrow T\mathcal{Q}$ is a \textbf{billiard flow} on $T\mathcal{Q}$ if and only if for each $Z_{0}\in T\mathcal{Q}$ the map $t\mapsto (\Pi_{1}\circ T^{t})(Z_{0})$ is a billiard trajectory on $\mathcal{Q}$ outside a finite ($Z_{0}$-dependent) subset of $\mathbb{R}$. Moreover, we say that a billiard flow $\{T^{t}\}_{t\in\mathbb{R}}$ \emph{conserves momentum and energy} if and only if for each $Z_{0}\in T\mathcal{Q}$ the map $t\mapsto(\Pi_{2}\circ T^{t})(Z_{0})$ satisfies \eqref{linmom} and \eqref{energy} above outside a finite ($Z_{0}$-dependent) subset of $\mathbb{R}$. 

As Proposition \ref{uniqueness} above makes clear, the definition of a momentum- and energy-conserving billiard flow map $T^{t}:T\mathcal{Q}\rightarrow T\mathcal{Q}$ is unambiguous for those points in $T\mathcal{Q}$ whose associated momentum- and energy-conserving billiard trajectories admit only isolated binary collisions on $\mathbb{R}$. However, owing to the non-uniqueness of extension of momentum- and energy-conserving billiard trajectories $X:\mathbb{R}\rightarrow\mathcal{Q}$ past those times $\tau\in\mathbb{R}$ for which $X(\tau)\in\partial_{\gamma}\mathcal{Q}$ whenever the chain $\gamma$ admits a link of length greater than 2, defining each $T^{t}$ \emph{globally} on $T\mathcal{Q}$ is problematic: see \cite{wilkinson2023maximal}. However, for the purposes of proving Theorem \ref{liouv} above, this proves to be no barrier. Indeed, any such flow operator $S^{t}$ agrees with our representation formula \eqref{simple} on a Lebesgue full-measure subset of $T\mathcal{Q}$, whence the pushforward measures $S^{t}\#\Lambda$ are indistinguishable as measures on $T\mathcal{Q}$, no matter the particular momentum- and energy-conserving flow with which one works.
\subsection{Root Systems and Weyl Groups}
Let us now set out the elements of root systems of use in our representation formula \eqref{repontableP}. Much of what we discuss here can be found in Hall (\cite{hall2013lie}, Chapter 8). We shall work with the particular root system $A_{N-1}\subset\mathbb{R}^{N}$ defined by
\begin{equation}
A_{N-1}:=\left\{
e_{i}^{N}-e_{j}^{N}\,:\,(i, j)\in\{1, ..., N\}^{2}\setminus\delta
\right\},
\end{equation}
where $\delta$ denotes the diagonal set $\{(1, 1), ..., (N, N)\}$. We define a subset of \emph{simple roots} $\Delta\subset A_{N-1}$ by
\begin{equation}
\Delta:=\left\{
-e_{i}^{N}+e_{j}^{N}\,:\,1\leq i<j\leq N
\right\}.
\end{equation}
The collection of \emph{Weyl chambers} associated to the root system $A_{N-1}$ comprises all those connected components of the set 
\begin{equation}
\mathbb{R}^{N}\setminus\left(
\bigcup_{\alpha\in A_{N-1}}\left\{
Y\in\mathbb{R}^{N}\,:\,Y\cdot\alpha=0
\right\}
\right).
\end{equation}
Associated to these Weyl chambers, we also define the boundary hyperplanes $\Pi_{i, j}\subset\mathbb{R}^{N}$ for each $1\leq i<j\leq N$ by
\begin{equation}
\Pi_{i, j}:=\left\{
Y=(y_{1}, ..., y_{N})\in\mathbb{R}^{N}\,:\,y_{i}=y_{j}
\right\}.
\end{equation}
We note that $\Pi_{i, j}=\{Y\in\mathbb{R}^{N}\,:\,Y\cdot\alpha_{i, j}=0\}$, where $\alpha_{i, j}:=-e_{i}^{N}+e_{j}^{N}$. In turn, we define the pencil of hyperplanes $\Pi\subset\mathbb{R}^{N}$ by
\begin{equation}
\Pi:=\bigcup_{i=1}^{N}\bigcup_{j>i}\Pi_{i, j}.    
\end{equation}
The \emph{fundamental Weyl chamber} $c\subset\mathbb{R}^{N}$ (with respect to $\Delta$) is defined by
\begin{equation}
c:=\left\{
Y\in\mathbb{R}^{N}\,:\, Y\cdot\beta>0\hspace{2mm}\text{for all}\hspace{2mm}\beta\in\Delta
\right\}.
\end{equation}
We remark in passing that with the set of simple roots $\Delta$ so-defined it holds that $c=\mathcal{Q}^{\circ}$, the interior of the fundamental table $\mathcal{Q}$. The \emph{Weyl group} $W\subset\mathrm{O}(N)$ of the root system $A_{N-1}$ is defined to be
\begin{equation}
W:=\left\langle
\left\{
I_{N}-(e_{i}^{N}-e_{j}^{N})\otimes(e_{i}^{N}-e_{j}^{N})\,:\,1\leq i<j\leq N
\right\}
\right\rangle,
\end{equation}
where $\otimes$ denotes the tensor product on $\mathbb{R}^{N}\times\mathbb{R}^{N}$, and where the angular brackets $\langle\cdot\rangle$ denote closure with respect to matrix multiplication. Let us now store the following basic result upon which we shall draw in the sequel.
\begin{prop}\label{transitive}
Suppose $X\in\mathbb{R}^{N}\setminus\Pi$. It holds there exists a unique $g\in W$ such that $X\in g^{-1}c$.
\end{prop}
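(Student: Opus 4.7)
The plan is to identify the Weyl group $W$ with the symmetric group $\mathfrak{s}(N)$ acting on $\mathbb{R}^{N}$ by coordinate permutation, and then exploit that the condition $X\in\mathbb{R}^{N}\setminus\Pi$ is precisely the condition that the coordinates of $X$ are pairwise distinct.

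First, I would verify that each generator $I_{N}-(e_{i}^{N}-e_{j}^{N})\otimes(e_{i}^{N}-e_{j}^{N})$ is the orthogonal reflection across $\Pi_{i,j}$, and that it acts on $\mathbb{R}^{N}$ by swapping the $i$\textsuperscript{th} and $j$\textsuperscript{th} coordinates and fixing all others; this is a direct one-line computation. Since transpositions generate $\mathfrak{s}(N)$, and closure under matrix multiplication preserves the structure of the permutation-matrix representation, the group $W$ is exactly the image of $\mathfrak{s}(N)$ under its standard $N$-dimensional representation. Under this identification, the action of any $g\in W$ on $X$ amounts to rearranging the coordinates of $X$ by some permutation.

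For existence, given $X=(x_{1},\ldots,x_{N})\in\mathbb{R}^{N}\setminus\Pi$, the entries $x_{1},\ldots,x_{N}$ are pairwise distinct, so there exists a unique $\sigma\in\mathfrak{s}(N)$ with $x_{\sigma(1)}<x_{\sigma(2)}<\cdots<x_{\sigma(N)}$. Letting $g\in W$ denote the corresponding element, we obtain $gX\in c$, since the paper records that $c=\mathcal{Q}^{\circ}$ is precisely the open set of strictly increasing $N$-tuples. Equivalently, $X\in g^{-1}c$, establishing existence.

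For uniqueness, suppose $g,h\in W$ both satisfy $X\in g^{-1}c$ and $X\in h^{-1}c$. Then $gX$ and $hX$ are two strictly increasing rearrangements of the same multiset $\{x_{1},\ldots,x_{N}\}$, and since its entries are pairwise distinct, such a rearrangement is unique. Hence $gX=hX$, so $hg^{-1}$ fixes $gX$, a vector with distinct coordinates; because a coordinate permutation fixing a vector with distinct entries must be the identity, $hg^{-1}=I_{N}$ and $g=h$. No step presents a genuine obstacle: the entire proposition is the concrete translation of the standard fact that $W$ acts simply transitively on the set of Weyl chambers, specialised to the permutation realisation of the Weyl group of $A_{N-1}$.
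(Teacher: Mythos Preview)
Your argument is correct. The identification of $W$ with the permutation-matrix representation of $\mathfrak{s}(N)$, the computation that $c=\mathcal{Q}^{\circ}$ consists of strictly increasing tuples, and the existence/uniqueness via sorting a tuple of pairwise distinct reals are all valid and cleanly executed.

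By way of comparison, the paper does not supply its own argument at all: it simply refers the reader to Hall (\cite{hall2013lie}, Chapter 8.7), where the general fact that the Weyl group acts simply transitively on the set of Weyl chambers is established for an arbitrary root system. Your approach is therefore more elementary and fully self-contained, exploiting the concrete realisation of $A_{N-1}$ and avoiding any appeal to the general structure theory (length functions, walls, galleries, etc.) used in the textbook treatment. The trade-off is that your proof is specific to $A_{N-1}$ and does not indicate why the result holds for other root systems, whereas the cited reference covers the general case; for the purposes of this paper, however, only $A_{N-1}$ is needed, so your direct argument is entirely adequate and arguably preferable for readers unfamiliar with root-system theory.
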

\begin{proof}
See Hall (\cite{hall2013lie}, Chapter 8.7). 
\end{proof}
\section{The Weyl Group Representation Formula}\label{weylrep}
As highlighted above, we choose to work on the fundamental table $\mathcal{Q}$ rather than $\mathcal{P}_{r}$ in order to remove the inessential presence of the rod diameter $2r$. We now prove the following Theorem. 
\begin{thm}\label{repo}
Let $N\geq 2$. For each $Z_{0}\in\mathcal{G}$, the map $\widetilde{X}:\mathbb{R}\rightarrow\mathcal{Q}$ defined pointwise by
\begin{equation}\label{formulaforQ}
\widetilde{X}(t):=\sum_{g\in W}\mathds{1}_{g^{-1}c}(X_{0}+tV_{0})   g(X_{0}+tV_{0}) 
\end{equation}
for $t\in\mathbb{R}$ admits the property that the associated map $X:\mathbb{R}\rightarrow\mathcal{Q}$ defined by 
\begin{equation}\label{fudged}
X(t):=\left\{
\begin{array}{ll}
\widetilde{X}(t) & \quad \text{if}\hspace{2mm}t\in\mathbb{R}\setminus\mathcal{D}(Z_{0}), \vspace{2mm}\\
\displaystyle\lim_{s\rightarrow t}\widetilde{X}(s) & \quad\text{if}\hspace{2mm}t\in\mathcal{D}(Z_{0}),
\end{array}
\right.
\end{equation}
is a billiard trajectory that conserves momentum and energy, where $\mathcal{D}(Z_{0})\subset\mathbb{R}$ is the finite set of (removable) discontinuities of the map $\widetilde{X}$, namely
\begin{equation}\label{discontinuityset}
\mathcal{D}(Z_{0}):=\left\{
s\in\mathbb{R}\,:\,X_{0}+sV_{0}\in\Pi
\right\}.
\end{equation}
\end{thm}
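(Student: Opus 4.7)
The plan is to recognise the formula \eqref{formulaforQ} as the coordinate-sorting map in Weyl-theoretic disguise, and then to read off the billiard and conservation properties by elementary means. First, I would apply Proposition \ref{transitive}: for each time $t$ at which $L(t):=X_{0}+tV_{0}$ lies in the regular set $\mathbb{R}^{N}\setminus\Pi$, there is a unique $g_{t}\in W$ with $L(t)\in g_{t}^{-1}c$, so the entire sum in \eqref{formulaforQ} collapses to the single term $g_{t}(L(t))\in c=\mathcal{Q}^{\circ}$. Since $W$ is the Weyl group of $A_{N-1}$, it acts on $\mathbb{R}^{N}$ by permutation of coordinates, and since $c$ is the set of strictly increasing $N$-tuples, the element $g_{t}$ is exactly the permutation that sorts the coordinates of $L(t)$ in ascending order; accordingly $\widetilde{X}(t)=\mathrm{sort}(L(t))$ for all $t\in\mathbb{R}\setminus\mathcal{D}(Z_{0})$.

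Second, I would use the hypothesis $Z_{0}\in\mathcal{G}$ to show that $\mathcal{D}(Z_{0})$ is finite and that $X$ is continuous and piecewise linear. The key point is that $\ell(X_{0},V_{0})$ cannot be contained in any hyperplane $\Pi_{i,j}$ (else, by the order structure of $\mathcal{Q}$, the line would lie in some $\partial_{\gamma}\mathcal{Q}$ with $\gamma\notin\Gamma_{2}$, contradicting the $\mathcal{G}$-condition), so $L$ crosses each of the finitely many $\Pi_{i,j}$ at most once; hence $\mathcal{D}(Z_{0})$ is finite. On each maximal open interval $I\subset\mathbb{R}\setminus\mathcal{D}(Z_{0})$ the permutation $g_{t}$ is a constant $g_{I}\in W$, so $\widetilde{X}(t)=g_{I}(X_{0})+t\,g_{I}(V_{0})$ is affine on $I$ and both one-sided derivatives exist there. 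At any $\tau\in\mathcal{D}(Z_{0})$, the two intervals meeting $\tau$ correspond to two adjacent chambers $g_{-}^{-1}c$ and $g_{+}^{-1}c$ separated by a single hyperplane $\Pi_{i,j}$ containing $L(\tau)$, whence $g_{+}^{-1}$ and $g_{-}^{-1}$ differ by the reflection fixing $\Pi_{i,j}$, so $g_{-}(L(\tau))=g_{+}(L(\tau))=\mathrm{sort}(L(\tau))$ and the left and right limits of $\widetilde{X}$ at $\tau$ coincide. This common limit lies in $\partial_{\gamma}\mathcal{Q}$ for the $\gamma\in\Gamma_{2}$ determined by the pair of coordinates that become equal in $L(\tau)$, so the removable-discontinuity extension $X$ of \eqref{fudged} is continuous, piecewise linear, and non-differentiable only at points of $\partial\mathcal{Q}$, i.e.\ a billiard trajectory on $\mathcal{Q}$.

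Third, momentum and energy conservation are immediate: on each interval $I$, the velocity map satisfies $V(t)=g_{I}V_{0}$, so
\begin{equation*}
V(t)\cdot\mathbf{1}=g_{I}V_{0}\cdot\mathbf{1}=V_{0}\cdot g_{I}^{-1}\mathbf{1}=V_{0}\cdot\mathbf{1}
\end{equation*}
(since a coordinate permutation fixes $\mathbf{1}$), and $|V(t)|^{2}=|g_{I}V_{0}|^{2}=|V_{0}|^{2}$ by orthogonality of $g_{I}\in\mathrm{O}(N)$; taking left-limits at $\tau\in\mathcal{D}(Z_{0})$ preserves both identities. The main obstacle in this plan is the joint bookkeeping of the second step: extracting finiteness of $\mathcal{D}(Z_{0})$ from the somewhat indirect definition of $\mathcal{G}$, and verifying that at each $\tau\in\mathcal{D}(Z_{0})$ the two one-sided limits of $\widetilde{X}$ genuinely agree. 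This matching is precisely where the reflection structure of $W$ plays its essential role; once the reformulation $\widetilde{X}(t)=\mathrm{sort}(L(t))$ has been made, the remaining claims reduce to straightforward properties of the coordinate-sorting map.
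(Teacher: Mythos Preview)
Your proof is correct and follows essentially the same route as the paper's: both collapse the sum to a single term via Proposition~\ref{transitive}, argue that $\mathcal{D}(Z_{0})$ is finite, match the one-sided limits at each $\tau\in\mathcal{D}(Z_{0})$ by observing that the two relevant chambers are adjacent and hence related by a single reflection fixing $L(\tau)$, and then deduce conservation from $W\subset\mathrm{O}(N)$ together with $g\mathbf{1}=\mathbf{1}$. Your identification of $\widetilde{X}(t)$ with $\mathrm{sort}(L(t))$ is a helpful conceptual gloss that the paper does not make explicit, and your justification of the finiteness of $\mathcal{D}(Z_{0})$ is in fact more detailed than the paper's (which simply asserts it), but the logical skeleton is the same.
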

\begin{proof}
We break the proof down into several steps. Firstly, we show that $\widetilde{X}(t)\in\mathcal{Q}$ for all $t\in\mathbb{R}$. Secondly, we identify the finite set of times $\mathcal{D}(Z_{0})\subset\mathbb{R}$ at which $\widetilde{X}$ is discontinuous. Thirdly, in order to show that $X$ as defined in \eqref{fudged} above is a billiard trajectory on $\mathcal{Q}$, we show that $\lim_{s\rightarrow \tau}\widetilde{X}(s)$ exists for any $\tau\in\mathcal{D}(\widetilde{X})$, whence $\widetilde{X}$ admits finitely-many removable points of discontinuity. Finally, we show that the left-derivative of $X$ satisfies the conservation of momentum and the conservation of energy.

Owing to the fact that the closure of the union of all Weyl chambers associated to the root system $A_{N-1}$ is $\mathbb{R}^{N}$ (c.f. Proposition \ref{transitive}), it is manifest that $\widetilde{X}$ in \eqref{formulaforQ} is piecewise linear on $\mathbb{R}\setminus\mathcal{D}(Z_{0})$. However, in order to prove that this formula defines a billiard trajectory on $\mathcal{Q}$, we must demonstrate that $\widetilde{X}$ so-defined admits (i) its range in $\mathcal{Q}$, and (ii) extends continuously from $\mathbb{R}\setminus\mathcal{D}(Z_{0})$ to $\mathbb{R}$. Indeed, recalling that $\mathcal{Q}^{\circ}=c$, for any $\beta\in\Delta$ it holds that
\begin{equation}
\begin{array}{lcl}
\widetilde{X}(t)\cdot\beta & = & \displaystyle \sum_{g\in W}\mathds{1}_{g^{-1}c}(X_{0}+tV_{0})g(X_{0}+tV_{0})\cdot\beta\vspace{2mm}\\
& = & h(X_{0}+tV_{0})\cdot \beta \vspace{2mm}\\
& > & 0
\end{array}
\end{equation}
for all $t\in\mathbb{R}\setminus\mathcal{D}(\widetilde{X})$ by definition of the fundamental Weyl chamber $c$, where $h=h(Z_{0}, t)\in W$ is the unique group element satisfying the identity $X_{0}+tV_{0}\in h^{-1}c$. If $t\in\mathcal{D}(\widetilde{X})$, then $\widetilde{X}(t)=0$, whence $\widetilde{X}(t)\in\mathcal{Q}$. As $Z_{0}\in\mathcal{G}$, the set $\mathcal{D}(Z_{0})$ as defined in \eqref{discontinuityset} above is a finite set of cardinality $N_{\star}:=N(N-1)/2$. It constitutes a set of points at which $\widetilde{X}$ is discontinuous as $\Pi\cap(\cup_{g\in W}g^{-1}c)=\varnothing$, whence $\widetilde{X}(\tau)=0\in\mathbb{R}^{N}$ for all $\tau\in\mathcal{D}(Z_{0})$. By definition of $\mathcal{G}$, it cannot be the case that $X_{0}+sV_{0}=0\in\mathbb{R}^{N}$ for any $s\in\mathbb{R}$, and so $\mathcal{D}(Z_{0})$ characterises the set of points at which $\widetilde{X}$ is discontinuous on $\mathbb{R}$.

It remains to show that the points of discontinuity $\mathcal{D}(Z_{0})$ of the function $\widetilde{X}$ are removable. Let $\tau\in \mathcal{D}(Z_{0})$ be given. We note that $\lim_{t\rightarrow\tau}(X_{0}+tV_{0})\in \Pi_{i,j}$ for some $1\leq i<j\leq N$. As Weyl chambers are open sets, there exist numbers $\eta_{-}, \eta_{+}>0$ and Weyl group elements $g_{-}, g_{+}\in W$ such that $X_{0}+tV_{0}\in g_{-}^{-1}c$ for all $\tau-\eta_{-}<t<\tau$ and $X_{0}+tV_{0}\in g_{+}^{-1}c$ for all $\tau<t<\tau+\eta_{+}$. However, as the Weyl chambers $g_{-}^{-1}c$ and $g_{+}^{-1}c$ are adjacent to one another and are separated by the boundary plane $\Pi_{i, j}$, it holds that $g_{+}^{-1}c=\sigma_{i, j}^{-1}g_{-}^{-1}c$, whence $g_{+}=g_{-}\sigma_{i, j}$ in $W$. Finally, for all $\tau-\eta_{-}<t<\tau$, it holds from formula \eqref{formulaforQ} that
\begin{equation}
X(t)=g_{-}(X_{0}+tV_{0}),    
\end{equation}
whence
\begin{equation}
\lim_{t\uparrow\tau}X(t)=g_{-}(X_{0}+\tau V_{0}).    
\end{equation}
Moreover, for all $\tau<t<\tau+\eta_{+}$, it holds that
\begin{equation}
X(t)=g_{-}\sigma_{i, j}(X_{0}+tV_{0}),    
\end{equation}
whence
\begin{equation}
\lim_{t\downarrow\tau}X(t)=g_{-}\sigma_{i, j}(X_{0}+\tau V_{0})=g_{-}(X_{0}+\tau V_{0}),     
\end{equation}
as $X_{0}+\tau V_{0}\in \Pi_{i, j}$. Consequently, the limit of $\widetilde{X}$ at $\tau\in \mathcal{D}(Z_{0})$ exists and is not equal to $0\in\mathbb{R}^{N}$. Thus, let us define the continuous map $X:\mathbb{R}\rightarrow\mathcal{Q}$ pointwise by
\begin{equation}\label{betterdef}
X(t):=\left\{
\begin{array}{ll}
\widetilde{X}(t) & \quad \text{if}\hspace{2mm}t\in\mathbb{R}\setminus\mathcal{D}(Z_{0}), \vspace{2mm}\\
\lim_{s\rightarrow t}\widetilde{X}(s) & \quad\text{if}\hspace{2mm}t\in\mathcal{D}(Z_{0}).
\end{array}
\right.
\end{equation}
As continuous piecewise linear maps $X:\mathbb{R}\rightarrow\mathcal{Q}$ with finitely-many points of non-differentiability are both left- and right-differentiable everywhere on $\mathbb{R}$, we conclude that $X$ defined by \eqref{betterdef} is a billiard trajectory on $\mathcal{Q}$. In particular, it holds that $\mathcal{D}(\widetilde{X})=\mathcal{T}(X)$.

Now that we have demonstrated that $X$ is a billiard trajectory on $\mathcal{Q}$, its associated velocity map $V$ is well defined. Indeed, it holds by direct computation that
\begin{equation}
V(t)=\frac{d}{ds_{-}}\bigg|_{s=t}X(s)=\sum_{g\in W}\mathds{1}_{g^{-1}c}(X_{0}+tV_{0}) gV_{0}   
\end{equation}
for all $t\in\mathbb{R}\setminus\mathcal{T}(Z_{0})$, and that
\begin{equation}
V(\tau)=\lim_{t\uparrow\tau}V(t)    
\end{equation}
for all $\tau\in\mathcal{T}(X)$. Now, as every element $g\in W\subset\mathrm{O}(N)$ admits $\mathbf{1}\in\mathbb{R}^{N}$ as an eigenvector with corresponding eigenvalue 1, it follows that
\begin{equation}
\begin{array}{lcl}
V(t)\cdot\mathbf{1} & = & \displaystyle \sum_{g\in W}\mathds{1}_{g^{-1}c}(X_{0}+tV_{0})(gV_{0}\cdot\mathbf{1}) \vspace{2mm}\\
& = & \displaystyle\sum_{g\in W}\mathds{1}_{g^{-1}c}(X_{0}+tV_{0})(V_{0}\cdot g^{T}\mathbf{1})\vspace{2mm}\\
& = & \displaystyle\left(\sum_{g\in W}\mathds{1}_{g^{-1}c}(X_{0}+tV_{0})\right)V_{0}\cdot\mathbf{1}\vspace{2mm}\\
& = & V_{0}\cdot \mathbf{1}
\end{array}
\end{equation}
for all $t\in\mathcal{T}(Z_{0})$. Moreover, owing to the facts that the Weyl group comprises only orthogonal matrices and that distinct Weyl chambers are disjoint subsets of $\mathbb{R}^{N}$, it holds that
\begin{equation}
\begin{array}{lcl}
|V(t)|^{2} & = & \displaystyle \left|\sum_{g\in W}\mathds{1}_{g^{-1}c}(X_{0}+tV_{0})gV_{0}\right|^{2} \vspace{2mm}\\
& = & \displaystyle \sum_{g\in W}\mathds{1}_{g^{-1}c}(X_{0}+tV_{0})|gV_{0}|^{2}\vspace{2mm}\\
& = & \displaystyle \left(\sum_{g\in W}\mathds{1}_{g^{-1}c}(X_{0}+tV_{0})\right)|V_{0}|^{2}\vspace{2mm}\\
& = & |V_{0}|^{2}
\end{array}
\end{equation}
for all $t\in\mathbb{R}\setminus\mathcal{T}(Z_{0})$. As a result, $V$ satisfies \eqref{linmom} and \eqref{energy} above.
\end{proof}
\begin{figure}
    \centering
    \includegraphics[angle=-90,origin=c, scale=0.50]{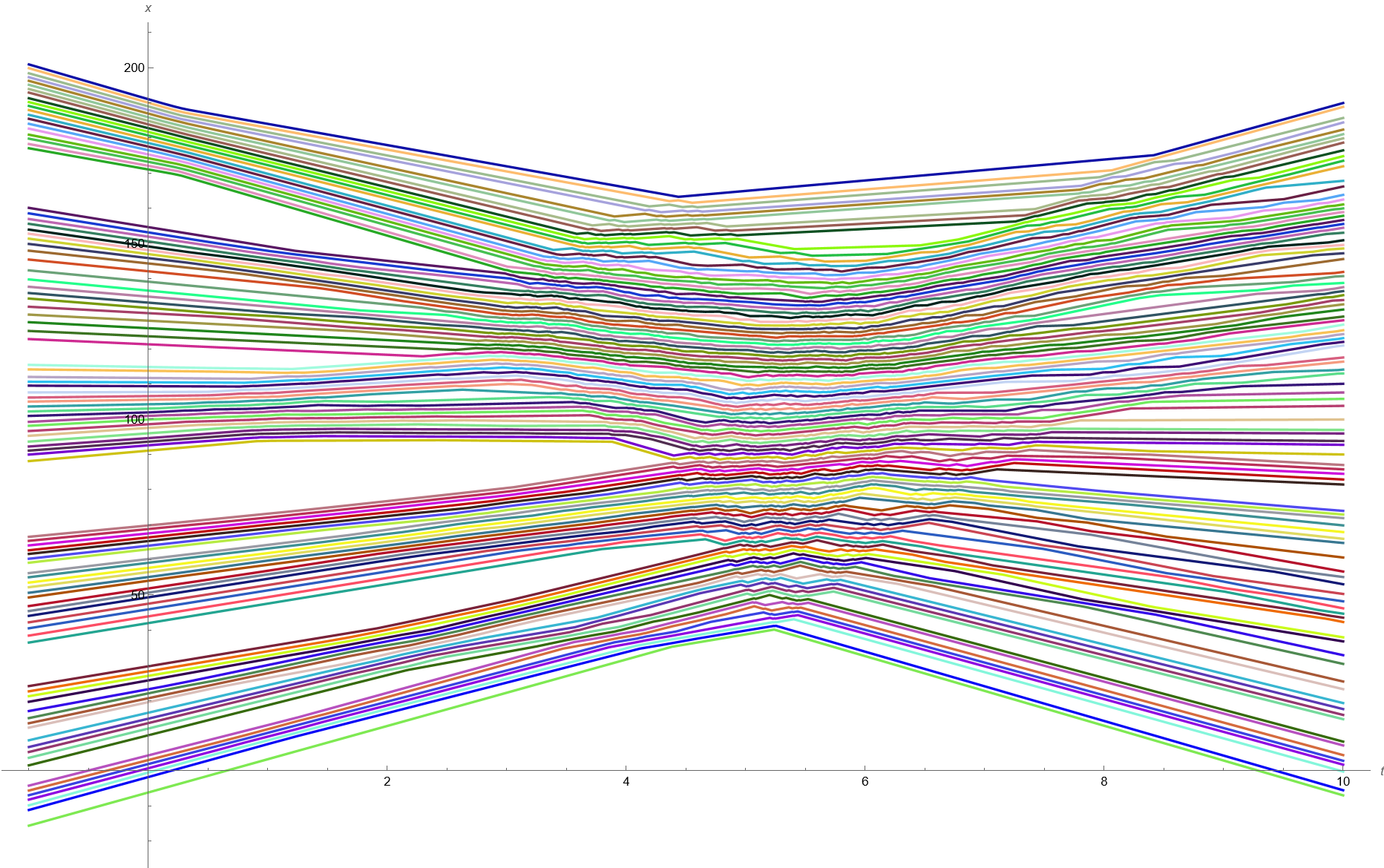}
    \caption{An illustration of the analytical formula \eqref{repontableP} in the case $N=100$. The above diagram shows 100 superimposed centre-of-mass maps $t\mapsto x_{i}(t)$ for $i\in\{1, ..., 100\}$ defined pointwise by $x_{i}(t):=X(t)\cdot e^{100}_{i}$ for $t\in\mathbb{R}$.
    }
    \label{figgy}
\end{figure}
As a simple corollary of the above result on the tangent bundle $T\mathcal{Q}$ of the fundamental table $\mathcal{Q}$, we now prove Theorem \ref{mainres} of this article.
\begin{cor}
Let $N\geq 2$ and $r>0$. For any $Z_{0}=(X_{0}, V_{0})\in \mathcal{G}_{r}:=S_{r}\mathcal{G}$ the unique momentum- and energy-conserving billiard trajectory $X:\mathbb{R}\rightarrow\mathcal{P}_{r}$ with the properties $X(0)=X_{0}$ and $\frac{d}{dt_{-}}|_{t=0}X=V_{0}$ is given by the formula
\begin{equation}
X(t)=\sum_{g\in W}\mathds{1}_{g^{-1}c}(S_{-r}(X_{0})+tV_{0})S_{r}(g(S_{-r}(X_{0})+tV_{0}))    
\end{equation}
for all $t\in\mathbb{R}$,.
\end{cor}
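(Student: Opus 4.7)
The plan is to reduce directly to Theorem \ref{repo} via the affine change of variables induced by the shift operator $S_{r}$. The key observation is that $S_{r}:\mathbb{R}^{N}\rightarrow\mathbb{R}^{N}$ is an affine bijection taking the fundamental table $\mathcal{Q}$ onto the component $\mathcal{Q}_{r}\subset\mathcal{P}_{r}$ of the hard sphere table, and similarly intertwines the associated notions of boundary components and tangent bundles: indeed, the ordering $y_{1}\leq\ldots\leq y_{N}$ characterising $\mathcal{Q}$ is transported by $S_{r}$ to the $2r$-spaced ordering defining $\mathcal{Q}_{r}$. Consequently, $\mathcal{G}_{r}=S_{r}\mathcal{G}$ precisely parametrises those initial data in $T\mathcal{P}_{r}$ whose trajectories experience only isolated binary collisions.

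Given $Z_{0}=(X_{0},V_{0})\in\mathcal{G}_{r}$, I would first set $\widetilde{Z}_{0}:=(S_{-r}(X_{0}),V_{0})\in\mathcal{G}$ and invoke Theorem \ref{repo} to produce the unique momentum- and energy-conserving billiard trajectory $\widetilde{X}:\mathbb{R}\rightarrow\mathcal{Q}$ satisfying $\widetilde{X}(0)=S_{-r}(X_{0})$ and $\frac{d}{dt_{-}}|_{t=0}\widetilde{X}=V_{0}$, given away from the finite discontinuity set by
\begin{equation*}
\widetilde{X}(t)=\sum_{g\in W}\mathds{1}_{g^{-1}c}(S_{-r}(X_{0})+tV_{0})\, g(S_{-r}(X_{0})+tV_{0}).
\end{equation*}
I would then define $X:=S_{r}\circ\widetilde{X}$ and verify the three claims about it: (i) $X$ is a piecewise linear map into $\mathcal{Q}_{r}$ with the same left- and right-derivatives (and same collision times) as $\widetilde{X}$, since $S_{r}$ is a translation; (ii) the initial conditions $X(0)=X_{0}$ and $V(0)=V_{0}$ hold by construction; and (iii) the momentum and energy conservation laws \eqref{linmom}--\eqref{energy} for $X$ are inherited verbatim from those already established for $\widetilde{X}$ in Theorem \ref{repo}. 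Uniqueness of $X$ follows by the same resolution-of-binary-collision argument underlying Proposition \ref{uniqueness}, applied on $\mathcal{P}_{r}$.

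The final step is to push the shift through the Weyl sum to recover the claimed formula. This is the one place that requires a small observation: by Proposition \ref{transitive}, at each $t\in\mathbb{R}$ for which $S_{-r}(X_{0})+tV_{0}\notin\Pi$ there is a \emph{unique} $g\in W$ with $\mathds{1}_{g^{-1}c}(S_{-r}(X_{0})+tV_{0})=1$, whence
\begin{equation*}
\sum_{g\in W}\mathds{1}_{g^{-1}c}(S_{-r}(X_{0})+tV_{0})=1.
\end{equation*}
Consequently, the affine translation $r\sum_{i=1}^{N}ie_{i}^{N}$ defining $S_{r}$ can be distributed into the sum, yielding
\begin{equation*}
X(t)=S_{r}(\widetilde{X}(t))=\sum_{g\in W}\mathds{1}_{g^{-1}c}(S_{-r}(X_{0})+tV_{0})\, S_{r}\bigl(g(S_{-r}(X_{0})+tV_{0})\bigr),
\end{equation*}
which is exactly the formula in the statement. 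For the finitely many removable discontinuity times one extends by limits, as in Theorem \ref{repo}.

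The argument is essentially bookkeeping; there is no genuine obstacle, since all analytic work (piecewise linearity, existence of one-sided limits, verification of the conservation laws) has already been done in Theorem \ref{repo}. The only step requiring a moment of care is distributing $S_{r}$ through the Weyl sum, which succeeds precisely because the Weyl chambers partition $\mathbb{R}^{N}\setminus\Pi$ into pieces indexed bijectively by $W$.
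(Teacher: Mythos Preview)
Your proposal is correct and follows exactly the approach of the paper, which dispatches the corollary in one line: ``This follows by a simple conjugation of the dynamics on the fundamental table by the shift operator $S_{r}$.'' Your write-up is simply a careful unpacking of that conjugation, including the (helpful) observation that the indicator sum equals $1$ off $\Pi$ so that $S_{r}$ distributes through the Weyl sum.
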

\begin{proof}
This follows by a simple conjugation of the dynamics on the fundamental table by the shift operator $S_{r}$.    
\end{proof}
\section{Invariance of the Liouville Measure}\label{louy}
Now that we have constructed billiard trajectories corresponding to any initial datum in a full-measure subset of $T\mathcal{Q}$, we may use these to define an associated billiard flow $\{T^{t}\}_{t\in\mathbb{R}}$ whose trajectories are momentum- and energy-conserving billiard trajectories on $\mathcal{Q}$ outside a finite set of removable discontinuities. Indeed, for each $t\in\mathbb{R}$, we define the map $T^{t}:T\mathcal{Q}\rightarrow T\mathcal{Q}$ pointwise by
\begin{equation}
T^{t}(Z):=\sum_{g\in W}\mathds{1}_{g^{-1}c}(X+tV)\mathrm{diag}(g, g)\left(I_{2N}+t\sum_{i=1}^{N}e_{i}^{2N}\otimes e_{i+N}^{2N}\right)Z
\end{equation}
for $Z=(X, V)\in T\mathcal{Q}$, where $\mathrm{diag}(g, g)\in\mathbb{R}^{2N\times 2N}$ denotes the matrix given blockwise by
\begin{equation}
\mathrm{diag}(g, g):=\left(
\begin{array}{cc}
g & 0_{N} \\
0_{N} & g
\end{array}
\right)
\end{equation}
with $0_{N}\in\mathbb{R}^{N\times N}$ denoting the zero matrix. By Theorem \ref{repo} above, it holds that for each $Z\in\mathcal{G}$ the map $t\mapsto\Pi_{1}\circ T^{t}(Z)$ is equal, up to a finite set of removable discontinuities, to a momentum- and energy-conserving billiard trajectory. In turn, by definition $\{T^{t}\}_{t\in\mathbb{R}}$ is a billiard flow on $T\mathcal{Q}$ which conserves momentum and energy. Moreover, by the uniqueness of momentum- and energy-conserving trajectories starting at data in $\mathcal{G}$, it holds that for any momentum- and energy-conserving billiard flow $\{S^{t}\}_{t\in\mathbb{R}}$ and all $t\in\mathbb{R}$, the set $\{Z\in T\mathcal{Q}\,:\,S^{t}(Z)=T^{t}(Z)\}$ is of full measure.

Let us now prove the second main Theorem of this work.
\begin{thm}
Let $N\geq 2$. The one-parameter family $\{T^{t}\}_{t\in\mathbb{R}}$ admits the property that
\begin{equation}
T^{t}\#\Lambda=\Lambda    
\end{equation}
for all $t\in\mathbb{R}$, where $\Lambda=\mathscr{L}\mres T\mathcal{Q}$ is the Liouville measure on $T\mathcal{Q}$.
\end{thm}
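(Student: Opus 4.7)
The plan is to show that $T^{t}$ is piecewise linear on $T\mathcal{Q}$ with unit Jacobian determinant on each piece, and is essentially a bijection on a full-measure subset; invariance of $\Lambda$ then follows from the Jacobian change-of-variables formula applied piece-by-piece. For fixed $t \in \mathbb{R}$, I would partition $T\mathcal{Q}$ up to a Lebesgue-null set into
\begin{equation*}
A_g^t := \bigl\{(X,V) \in T\mathcal{Q} : X + tV \in g^{-1}c\bigr\}, \qquad g \in W.
\end{equation*}
By Proposition \ref{transitive}, these sets partition $T\mathcal{Q} \setminus \{(X,V) : X + tV \in \Pi\}$, and the exceptional locus is contained in a finite union of affine hyperplanes in $\mathbb{R}^{2N}$, hence is $\Lambda$-null.

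On each $A_g^t$, the sum defining $T^{t}$ reduces to the single linear map
\begin{equation*}
M_g^t(X,V) := (g(X+tV),\, gV),
\end{equation*}
which is the composition of the shear $(X,V) \mapsto (X+tV, V)$ with the block-diagonal map $\mathrm{diag}(g,g)$. Both factors have determinant $1$ (the shear is unit upper-triangular in block form, and $\det \mathrm{diag}(g,g) = \det(g)^2 = 1$ since $g \in \mathrm{O}(N)$), so $|\det M_g^t| = 1$. For any measurable $E \subseteq T\mathcal{Q}$, I would then expand
\begin{equation*}
\Lambda(T^{-t}(E)) \;=\; \sum_{g \in W} \mathscr{L}\bigl((M_g^{-t})^{-1}(E) \cap A_g^{-t}\bigr)
\end{equation*}
and apply the change-of-variables formula in each summand (using $|\det M_g^{-t}| = 1$) to obtain $\sum_{g \in W} \mathscr{L}(E \cap M_g^{-t}(A_g^{-t}))$. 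Granted that the images $\{M_g^{-t}(A_g^{-t})\}_{g \in W}$ form a partition of $T\mathcal{Q}$ modulo $\Lambda$-null sets, this collapses to $\mathscr{L}(E \cap T\mathcal{Q}) = \Lambda(E)$, yielding $T^{t}\#\Lambda = \Lambda$.

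The main obstacle is verifying this final partition claim. A direct calculation gives
\begin{equation*}
M_g^{-t}(A_g^{-t}) \;=\; \bigl\{(Y,U) \in c \times \mathbb{R}^N : g^{-1}(Y + tU) \in \mathcal{Q}\bigr\},
\end{equation*}
and for generic $(Y,U)$ with $Y + tU \notin \Pi$, Proposition \ref{transitive} supplies a unique $h \in W$ with $Y + tU \in h^{-1}c$. The condition $g^{-1}(Y+tU) \in \overline{c} = \mathcal{Q}$ then forces $g^{-1}h^{-1}c = c$, i.e.\ $g = h^{-1}$; this gives both essential disjointness and essential coverage of $c \times \mathbb{R}^N$. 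Since $c \times \mathbb{R}^N$ differs from $T\mathcal{Q}$ only by $\partial\mathcal{Q} \times \mathbb{R}^N$ (which is $\Lambda$-null), the bookkeeping closes out.

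The delicate part is simply this measure-theoretic bookkeeping: verifying that the union of the pencil loci $\{X + tV \in \Pi\}$ and $\{Y + tU \in \Pi\}$, the boundary stratum $\partial\mathcal{Q} \times \mathbb{R}^N$, and the bad set $\mathcal{B}$ is $\Lambda$-null, so that the change-of-variables argument applies verbatim on the full-measure complement. Each of these sets is either a finite union of affine hyperplanes or a product involving a lower-dimensional stratum, so null-ness is routine but must be tracked carefully so that the disjointness-and-coverage argument for $\{M_g^{-t}(A_g^{-t})\}_{g \in W}$ holds modulo null sets.
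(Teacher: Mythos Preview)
Your proposal is correct and follows essentially the same approach as the paper: both arguments exploit the piecewise linearity of $T^{t}$ with unit Jacobian on each Weyl-chamber region and then use that the chambers $g^{-1}c$ tile $\mathbb{R}^{N}$ up to the null pencil $\Pi$ to reassemble the pieces. The only cosmetic difference is that the paper tests against continuous $L^{1}$ functions (reducing via Urysohn) and carries out the change of variables inside the integral, whereas you work directly with measurable sets; your set-theoretic formulation is slightly more direct and avoids the Urysohn step, but the underlying change-of-variables computation and the key partition claim (your $\{M_{g}^{-t}(A_{g}^{-t})\}_{g\in W}$ covering $c\times\mathbb{R}^{N}$ modulo null sets, the paper's $\cup_{g\in W} g(c+tV)$ covering $\mathbb{R}^{N}$ modulo null sets) are the same.
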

\begin{proof}
As $T\mathcal{Q}$ endowed with the subspace Euclidean topology on $\mathbb{R}^{2N}$ is a normal topological space, it follows by Urysohn's lemma that for any closed subset $E\subset T\mathcal{Q}$ of finite mass there exists $\{\Phi_{i}\}_{i=1}^{\infty}\subset C^{0}(T\mathcal{Q})\cap L^{1}(T\mathcal{Q})$ such that $\Lambda(E)=\lim_{i\rightarrow\infty}\langle \Lambda, \Phi_{i}\rangle $, where the duality brackets are defined by 
\begin{equation}
\langle \Lambda, \Phi\rangle:=\int_{T\mathcal{Q}}\Phi\,d\Lambda    
\end{equation}
for all $\Phi\in C^{0}(T\mathcal{Q})\cap L^{1}(T\mathcal{Q})$. Now, for any $\Phi\in C^{0}(T\mathcal{Q})\cap L^{1}(\mathcal{Q})$, we find that
\begin{equation}
\begin{array}{cl}
& \langle T^{t}\#\Lambda, \Phi \rangle \vspace{2mm} \\
= & \displaystyle \int_{T\mathcal{Q}}\Phi(T^{t}(Z))\,dZ \vspace{2mm} \\
= & \displaystyle \int_{T\mathcal{Q}}\Phi\left(\sum_{g\in W}\mathds{1}_{g^{-1}c}(X+tV)g(X+tV), \sum_{g\in W}\mathds{1}_{g^{-1}c}(X+tV)gV\right)\,dZ,    
\end{array}
\end{equation}
and since it holds that 
\begin{equation}
\mathds{1}_{g^{-1}c}(X+tV)=\mathds{1}_{(g^{-1}c-tV)\cap c}(X)    
\end{equation}
for each fixed $t\in\mathbb{R}$ and $Z=(X, V)\in T\mathcal{Q}$, we deduce that
\begin{equation}
\begin{array}{cl}
& \langle T^{t}\#\Lambda, \Phi\rangle\vspace{2mm}\\
= & \displaystyle\int_{T\mathcal{Q}}\Phi\left(\sum_{g\in W}\mathds{1}_{(g^{-1}c-tV)\cap c}(X)g(X+tV), \sum_{g\in W}\mathds{1}_{(g^{-1}c-tV)\cap c}(X)gV \right)\,dZ \vspace{2mm}\\
= & \displaystyle\sum_{g\in W}\int_{\mathbb{R}^{N}}\int_{(g^{-1}c-tV)\cap c}\Phi(g(X+tV), gV)\,dXdV\vspace{2mm}\\
= & \displaystyle \sum_{g\in W}\int_{\mathbb{R}^{N}}\int_{g^{-1} c\cap(c+tV)}\Phi(gX, gV)\,dXdV\vspace{2mm}\\
= & \displaystyle \sum_{g\in W}\int_{\mathbb{R}^{N}}\int_{c\cap(g(c+tV))}\Phi(X, V)\,dXdV\vspace{2mm}\\
= & \displaystyle\int_{\mathbb{R}^{N}}\sum_{g\in W}\left(\int_{\mathcal{Q}\cap(g(c+tV))}\Phi(X, V)\,dX\right)dV\vspace{2mm}\\
= & \displaystyle\int_{\mathbb{R}^{N}}\int_{\mathcal{Q}\cap(\cup_{g\in W}g(c+tV))}\Phi(X, V)\,dXdV.
\end{array}
\end{equation}
However, since it holds that
\begin{equation}
\mathbb{R}^{N}\setminus\left(\bigcup_{g\in W}g(c+tV)\right)
\end{equation}
is a null set, we find that
\begin{equation}
\langle T^{t}\#\Lambda, \Phi\rangle=\int_{\mathbb{R}^{N}}\int_{\mathcal{Q}\cap\mathbb{R}^{N}}\Phi(X, V)\,dXdV, 
\end{equation}
whence it holds that
\begin{equation}
\langle T^{t}\#\Lambda, \Phi\rangle=\langle \Lambda, \Phi\rangle    
\end{equation}
for all $\Phi\in C^{0}(T\mathcal{Q})\cap L^{1}(T\mathcal{Q})$. It is now trivial to show that for any measurable $E\subseteq T\mathcal{Q}$ it holds that $T^{t}\#\Lambda(E)=\Lambda(E)$, whence we conclude that $\Lambda$ is an invariant measure of the one-parameter family of flow operators $\{T^{t}\}_{t\in\mathbb{R}}$.
\end{proof}
The proof of Theorem \ref{liouv} is now immediate, as the shift operators $S_{r}$, as well as the action of the permutations $\pi\in\mathfrak{s}(N)$, are measure-preserving for any $r\in\mathbb{R}$. As an immediate consequence of the above result, we note that the (non-normalised) canonical ensemble $\mu:=E_{\beta}\Lambda_{r}$ whose density $E_{\beta}$ is given by
\begin{equation}
E_{\beta}(Z):=\mathrm{exp}\left(-\beta\sum_{i=1}^{N}|v_{i}|^{2}\right)    
\end{equation}
for all $Z=(X, V)\in T\mathcal{P}_{r}$ and all $\beta>0$ is invariant under any momentum- and energy-conserving billiard flow on $T\mathcal{P}_{r}$.
\section{Closing Remarks}\label{closrem}
In this article, we have established an analytical formula for the dynamics of $N$ hard spheres on a line when initial data $Z_{0}=(X_{0}, V_{0})$ lie in a full-measure subset of phase space. We applied this analytical formula to prove that any momentum- and energy-conserving billiard flow $\{S^{t}\}_{t\in\mathbb{R}}$ on the tangent bundle of the hard sphere table admits the Liouville measure as an invariant measure. Our analytical formula may also be employed to construct both mild and weak solutions of the Liouville equation for $N$ hard rods as well as its associated BBGKY hierarchy: see Gallagher, Saint-Raymond and Texier \cite{gallagher2013newton} for further details.

We note that the representation formula \eqref{repontableP} extends formally to the case of \emph{infinitely-many} hard rods on a line, whereby the Weyl group $W$ of the root system $A_{N-1}$ is replaced by a suitable subgroup of bijections of the integers, and the fundamental chamber $c\subset\mathcal{Q}$ is replaced by the subset of $\mathbb{R}^{\omega}$ given by
\begin{equation}
\left\{X=\{x_{i}\}_{i\in\mathbb{Z}}\in\mathbb{R}^{\omega}\,:\,x_{j}\leq x_{j+1}\hspace{2mm}\text{for all}\hspace{1mm}j\in\mathbb{Z}\right\};    
\end{equation}
see Munkres (\cite{munkres2018topology}, Chapter 2). Such a representation formula in this case would offer an alternative proof to that of Sinai \cite{sinai1972ergodic} of the construction of dynamics of infinitely-many hard rods on a line. We do not pursue the details of this extension here. We also believe it possible in the case of finite $N\geq 2$ to extend the representation formula \eqref{repontableP} to \emph{all} initial data in $T\mathcal{P}_{r}$, not simply the good subset of data leading only to binary collisions. The main complications in this extension stem from (i) the structure of the boundaries of the Weyl chambers, in that they are manifolds with corners with components of every possible codimension in $\mathbb{R}^{N-1}$; and (ii) the fact that the momentum- and energy-conserving scattering of $M$ hard rods in simultaneous contact (with $M\geq 3$) is not uniquely defined, and so it is not immediately clear which scattering one ought to adopt in the pursuit of an analytical formula. This constitutes future work. It also would be of interest to investigate the possibility of establishing representation formulae for a class of billiard tables other than the very particular tables $\mathcal{Q}$ and $\mathcal{P}_{r}$ which were under study in this article.
\subsection*{Acknowledgments}
The author would like to thank Bart Vlaar for introducing him to the theory of root systems, and for several illuminating conversations related to the material in this article. The author would also like to thank Pierangelo Marcati for his hospitality at the Gran Sasso Science Institute (GSSI) in L'Aquila, which afforded a wonderful research environment and where a part of this work was done. 
\bibliographystyle{siam}
\bibliography{bibby}

\begin{thebibliography}{10}

\bibitem{aizenman1975ergodic}
{\sc M.~Aizenman, S.~Goldstein, and J.~L. Lebowitz}, {\em Ergodic properties of an infinite one dimensional hard rod system}, Communications in Mathematical Physics, 39 (1975), pp.~289--301.

\bibitem{alexander1975infinite}
{\sc R.~K. Alexander}, {\em The infinite hard-sphere system.}, University of California, Berkeley, 1975.

\bibitem{alexander1976time}
\leavevmode\vrule height 2pt depth -1.6pt width 23pt, {\em Time evolution for infinitely many hard spheres}, Communications in Mathematical Physics, 49 (1976), pp.~217--232.

\bibitem{baxter2016exactly}
{\sc R.~J. Baxter}, {\em Exactly solved models in statistical mechanics}, Elsevier, 2016.

\bibitem{boldrighini1984hydrodynamics}
{\sc C.~Boldrighini}, {\em Hydrodynamics for one-dimensional identical elastic hard rods}, Physica A: Statistical Mechanics and its Applications, 124 (1984), pp.~115--125.

\bibitem{boldrighini1983one}
{\sc C.~Boldrighini, R.~Dobrushin, and Y.~M. Sukhov}, {\em One-dimensional hard rod caricature of hydrodynamics}, Journal of Statistical Physics, 31 (1983), pp.~577--616.

\bibitem{boldrighini1997one}
{\sc C.~Boldrighini and Y.~M. Sukhov}, {\em One-dimensional hard-rod caricature of hydrodynamics:“{N}avier--{S}tokes correction” for local equilibrium initial states}, Communications in Mathematical Physics, 189 (1997), pp.~577--590.

\bibitem{chernov2006chaotic}
{\sc N.~Chernov and R.~Markarian}, {\em Chaotic billiards}, no.~127 in Mathematical Surveys and Monographs, American Mathematical Society, 2006.

\bibitem{franchini2017introduction}
{\sc F.~Franchini}, {\em An introduction to integrable techniques for one-dimensional quantum systems}, vol.~940, Springer, 2017.

\bibitem{gallagher2013newton}
{\sc I.~Gallagher, L.~Saint-Raymond, and B.~Texier}, {\em From Newton to Boltzmann: hard spheres and short-range potentials}, European Mathematical Society Z{\"u}rich, Switzerland, 2013.

\bibitem{gietka2023squeezing}
{\sc K.~Gietka and H.~Ritsch}, {\em Squeezing and overcoming the {H}eisenberg scaling with spin-orbit coupled quantum gases}, Physical Review Letters, 130 (2023), p.~090802.

\bibitem{granet2023wavelet}
{\sc E.~Granet}, {\em Wavelet representation of hardcore bosons}, arXiv preprint arXiv:2303.17494,  (2023).

\bibitem{gurin2022anomalous}
{\sc P.~Gurin and S.~Varga}, {\em Anomalous phase behavior of quasi-one-dimensional attractive hard rods}, Physical Review E, 106 (2022), p.~044606.

\bibitem{hall2013lie}
{\sc B.~C. Hall}, {\em Lie groups, Lie algebras, and representations}, Springer, 2013.

\bibitem{jepsen1965dynamics}
{\sc D.~Jepsen}, {\em Dynamics of a simple many-body system of hard rods}, Journal of Mathematical Physics, 6 (1965), pp.~405--413.

\bibitem{joyce2009manifolds}
{\sc D.~Joyce}, {\em On manifolds with corners}, arXiv preprint arXiv:0910.3518,  (2009).

\bibitem{lebowitz1967kinetic}
{\sc J.~Lebowitz and J.~Percus}, {\em Kinetic equations and density expansions: Exactly solvable one-dimensional system}, Physical Review, 155 (1967), p.~122.

\bibitem{lebowitz1968time}
{\sc J.~Lebowitz, J.~Percus, and J.~Sykes}, {\em Time evolution of the total distribution function of a one-dimensional system of hard rods}, Physical Review, 171 (1968), p.~224.

\bibitem{lee2012smooth}
{\sc J.~M. Lee}, {\em Smooth manifolds}, Springer, 2012.

\bibitem{munkres2018topology}
{\sc J.~Munkres}, {\em Topology}, Pearson Modern Classics for Advanced Mathematics Series, Pearson, 2018.

\bibitem{murphy1994dynamics}
{\sc T.~Murphy}, {\em Dynamics of hard rods in one dimension}, Journal of Statistical Physics, 74 (1994), pp.~889--901.

\bibitem{peacock2022quantum}
{\sc J.~C. Peacock, A.~Ljepoja, and C.~Bolech}, {\em Quantum coherent states of interacting {B}ose-{F}ermi mixtures in one dimension}, Physical Review Research, 4 (2022), p.~L022034.

\bibitem{percus1969exact}
{\sc J.~Percus}, {\em Exact solution of kinetics of a model classical fluid}, The Physics of Fluids, 12 (1969), pp.~1560--1563.

\bibitem{saxberg2022disorder}
{\sc B.~Saxberg, A.~Vrajitoarea, G.~Roberts, M.~G. Panetta, J.~Simon, and D.~I. Schuster}, {\em Disorder-assisted assembly of strongly correlated fluids of light}, Nature, 612 (2022), pp.~435--441.

\bibitem{schiltz2023kinetic}
{\sc E.~Schiltz-Rouse, H.~Row, and S.~A. Mallory}, {\em Kinetic temperature and pressure of an active {T}onks gas}, arXiv preprint arXiv:2304.13858,  (2023).

\bibitem{sinai1972ergodic}
{\sc Y.~G. Sinai}, {\em Ergodic properties of a gas of one-dimensional hard rods with an infinite number of degrees of freedom}, Functional Analysis and its Applications, 6 (1972), pp.~35--43.

\bibitem{tonks1936complete}
{\sc L.~Tonks}, {\em The complete equation of state of one, two and three-dimensional gases of hard elastic spheres}, Physical Review, 50 (1936), p.~955.

\bibitem{wilkinson2023maximal}
{\sc M.~Wilkinson}, {\em Maximal codimension collisions and invariant measures for hard spheres on a line}, arXiv preprint arXiv:2309.05815,  (2023).

\bibitem{zhang2022free}
{\sc Y.-X. Zhang and K.~M{\o}lmer}, {\em Free-fermion multiply excited eigenstates and their experimental signatures in 1d arrays of two-level atoms}, Physical Review Letters, 128 (2022), p.~093602.

\end{thebibliography}
\end{document}